\documentclass[a4paper,11pt]{amsart}
\usepackage{amsmath,amssymb,amsthm,xcolor,graphicx}
\usepackage[allcolors=purple,citecolor=violet,colorlinks=true]{hyperref}
\usepackage[margin=3cm]{geometry}
\usepackage{marginnote}


\newtheorem{theorem}{{Theorem}}[section]
\newtheorem{proposition}[theorem]{{Proposition}}
\newtheorem{definition}[theorem]{{Definition}}
\newtheorem{lemma}[theorem]{{Lemma}}
\newtheorem{corollary}[theorem]{{Corollary}}
\newtheorem{example}[theorem]{{Example}}

\newcommand{\I}{\mathcal{I}}
\newcommand{\R}{\mathbb{R}}
\newcommand{\dd}{\mathrm{d}}
\newcommand{\vol}{\operatorname{dvol}_g}
\newcommand{\Vol}{\operatorname{vol}_g}
\newcommand{\CS}{\mathcal{S}}
\newcommand{\df}{\mathrm{d}}

\title[Differentiability of the cosmological volume function]{Differentiability and other properties of the cosmological volume function}
\author{Leonardo Garc\'ia-Heveling}
\address{Faculty of Mathematics, University of Vienna \newline \indent Oskar-Morgenstern-Platz 1, 1090 Wien, Austria}
\email{leonardo.garcia.heveling@univie.ac.at}
\urladdr{leogarciaheveling.github.io}
\thanks{I thank Miguel S\'anchez for conversations at early stages of this work, and Greg Galloway and two anonymous referees for detailed feedback on the manuscript.}

\begin{document}

\begin{abstract}
    In a previous work, the regular cosmological volume function $\tau_V$ was introduced as an alternative to the regular cosmological time function of Andersson, Galloway, and Howard. Building on work by Chruściel, Grant and Minguzzi, in this paper we show that in many cases of interest, $\tau_V$ is a continuously differentiable temporal function. This leads to a canonical splitting of the metric tensor, and induces a canonical ``Wick-rotated" Riemannian metric. We also provide some further results and examples related to the cosmological time and volume functions.
\end{abstract}

\maketitle

\section{Introduction}

Two characteristic features of General Relativity are the lack of an observer independent notion of time, and the central role played by singularities. Of these, the second can sometimes relieve the first. For instance, Wald and Yip \cite{WaYi81} proposed that inside of a black hole, one can assign a natural time coordinate to any event via the \emph{maximum lifetime function}, given by the maximum proper time that an observer starting at the event can live, before falling into the singularity. Time-dually, Andersson, Galloway, and Howard \cite{AGH98} proposed to measure time on a cosmological spacetime $(M,g)$ as the maximum proper time that an observer at a given point $p$ can have experienced since the Big Bang. This yields the \emph{cosmological time function}
\begin{equation*}
    \tau(p) := \sup \{ L_g(\gamma) \mid \gamma \text{ a past-directed causal curve starting at } p  \}.
\end{equation*}
Both constructions rely on the relevant spacetime singularity manifesting itself through (future or past) timelike geodesic incompleteness. Other notions of incompleteness exist, such as bounded acceleration incompleteness, and volume incompleteness. The latter was recently introduced by the author in \cite{GH23}, and is defined as the presence of points $p \in M$ with $\Vol(I^\pm(p))$ arbitrarily small (in particular, finite). Here $\Vol$ is the canonical volume measure induced by the metric $g$, and $I^\pm(p)$ the chronological future/past of $p$. On past-volume incomplete spacetimes, it is then natural to define the \emph{cosmological volume function} \cite[Def.~6.1]{GH23}
\begin{equation*}
    \tau_V(p) := \Vol(I^-(p)).
\end{equation*}
The idea of defining time by measuring $I^-(p)$ goes back to Geroch~\cite{Ger70}, and is, by now, standard. In general, however, the canonical volume measure $\Vol$ can be infinite. This is usually circumvented by choosing a different measure (e.g.\ by adding a weight), but the choice of such a finite measure is completely arbitrary. Chru\'sciel, Grant, and Minguzzi~\cite{CGM16} showed that one can always find a (still highly non-unique) weighted measure $\mu$ such that $p \mapsto \mu(I^-(p))$ is a $C^1$ temporal function. Simply choosing the weight to be equal $1$ does not always work, however, even if $\tau_V$ is finite.

In this paper, we identify two settings of interest where $\tau_V$ is a $C^1$ temporal function, and hence a canonical one at that, since its definition only involves notions directly derived from the metric $g$.
\begin{enumerate}
    \item When $(M,g)$ is the future Cauchy development of some initial data: In this case, time is measured from the initial data, rather than from the Big Bang.
    \item A cosmological setting, where $(M,g)$ is past volume incomplete and has no past observer horizons: The second assumption means that $I^+(\gamma) = M$ for every past inextendible causal curve $\gamma$.
\end{enumerate}
See Example~\ref{ex:flrw} for a concrete model where (ii) is satisfied. The precise statements of the results are as follows, and the proofs rely on adapting the arguments in \cite{CGM16}.

\begin{theorem} \label{thm:main}
    Let $(M,g)$ be an orientable spacetime and $\CS$ a future Cauchy surface. Then the cosmological volume function $\tau_V$ of the spacetime $(I^+(\CS),g)$ is regular and is a $C^1$ temporal function.
\end{theorem}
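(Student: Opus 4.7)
The plan is to verify that $\tau_V$ on $\Omega := I^+(\CS)$ is finite everywhere, tends to zero along every past-inextendible causal curve (these two conditions constituting regularity), strictly increasing along timelike curves, continuous, and $C^1$ with everywhere timelike gradient. The governing structural idea is that because $\CS$ is a future Cauchy surface, every past-inextendible causal curve in $\Omega$ accumulates on $\CS$, so $J^-(p) \cap J^+(\CS)$ is compact for every $p \in \Omega$. This compactness replaces the role played in \cite{CGM16} by the careful choice of a weight: here the weight is identically $1$, but $\CS$ itself supplies the uniform bound on $I^-(p) \cap \Omega$ that otherwise would have to be engineered by hand.

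Finiteness of $\tau_V$ is then immediate, and the vanishing of $\tau_V$ along a past-inextendible causal curve $\gamma$ follows by monotone convergence applied to the nested family $I^-(\gamma(s_n)) \cap \Omega$ as $\gamma(s_n)$ approaches $\CS$, since their common intersection is empty. Strict monotonicity along timelike curves is standard: for $p \ll q$, the open set $I^+(p) \cap I^-(q)$ is nonempty and disjoint from $I^-(p)$ by chronology, and its positive volume forces $\tau_V(q) > \tau_V(p)$. For continuity I would combine lower semi-continuity (from openness of $I^-(p)$ and Fatou's lemma) with upper semi-continuity via dominated convergence, using any $q \gg p$ so that $I^-(p_k) \cap \Omega \subseteq I^-(q) \cap \Omega$ eventually, together with the fact that $\Vol(\partial I^-(p)) = 0$ because $\partial I^-(p)$ is a Lipschitz achronal hypersurface.

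For the $C^1$ property with timelike gradient I would adapt the geometric strategy of \cite{CGM16}. Near a point $p$, one parametrizes the past by past-directed null geodesics, writing each point of $I^-(p) \cap \Omega$ as $\exp_p(s \xi)$ for $\xi$ on the past null cone (normalized by an auxiliary Riemannian metric) and $s \in (0, T(p,\xi))$, where $T(p,\xi)$ is the stopping parameter at which the geodesic either exits $\Omega$ by crossing $\CS$ or meets the null cut locus. Then $\tau_V(p)$ becomes an iterated integral of the Jacobian of the exponential map over $(\xi,s)$. Differentiating this representation in $p$ and collecting the boundary contribution via the Gauss lemma identifies $\df \tau_V|_p(v)$ with a flux integral over the null part of $\partial I^-(p) \cap \Omega$ paired against $v$, which depends continuously on $p$ and yields a past-directed timelike covector.

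The hardest step will be the last one: the stopping function $T(p,\xi)$ is only Lipschitz in general and need not be smooth along the null cut locus, so differentiation under the integral must be justified with care. Here the Cauchy-surface hypothesis does the work previously done in \cite{CGM16} by the bespoke weight, since it forces $T$ to be uniformly bounded on $\Omega$; combined with the measure-zero character of the set of null directions meeting the cut locus and a dominated convergence argument for the Jacobi fields along the geodesic congruence, this makes the formal differentiation rigorous. The fact that the resulting gradient is timelike (rather than merely causal) then follows because the flux integrand is concentrated on genuinely past-directed null vectors, whose average against any timelike reference vector is strictly negative.
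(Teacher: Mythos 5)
Your proposal follows essentially the same route as the paper: regularity comes from compactness of $J^-(p)\cap J^+(\CS)$ (the paper reduces to the interior of the domain of dependence, so that the ambient spacetime is globally hyperbolic, and then invokes Proposition~\ref{prop:1}), and the $C^1$ property comes from a Leibniz-rule argument in which $\df\tau_V(v)$ is the flux of the Jacobi fields generated by $v$ through $E^-(p)\cap I^+(\CS)$, the null generators being stopped at $\min\{t_-(\gamma),t_\CS(\gamma)\}$; the paper justifies the differentiation by approximating the indicator of $I^-(p')\cap I^+(\CS)$ with compactly supported weights $\phi_\varepsilon$ and citing \cite[Lem.~3.1]{CGM16} rather than redoing the analysis directly, but this is bookkeeping, not a different method. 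One sentence of yours is imprecise: $\Vol\bigl(I^-(p)\cap I^+(\CS)\bigr)$ cannot be written as an iterated integral over $(\xi,s)$ with $\xi$ ranging over the (normalized, hence $(n-1)$-dimensional) past \emph{null} cone, since that only parametrizes the $n$-dimensional boundary $E^-(p)$; this slip is harmless, however, because your final formula for the derivative correctly involves only that boundary.
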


\begin{corollary} \label{cor:main}
 Let $(M,g)$ be an orientable causal spacetime with no past observer horizons, and let its cosmological volume function $\tau_V$ be finite. Then $\tau_V$ is regular and is a $C^1$ temporal function.
\end{corollary}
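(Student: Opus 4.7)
The plan is to follow the strategy of Theorem~\ref{thm:main}. I first show that $\tau_V$ is regular, i.e.\ $\tau_V(\gamma(s)) \to 0$ along every past-inextendible causal curve $\gamma$, and then reduce the $C^1$ property to Theorem~\ref{thm:main} via a Cauchy surface construction.

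For regularity, let $\gamma \colon (-\infty, 0] \to M$ be past-inextendible. The sets $I^-(\gamma(s))$ decrease in $s$ with finite volume, so $\tau_V(\gamma(s)) \searrow \Vol(A)$ where $A := \bigcap_s I^-(\gamma(s))$. If $y \in A$, then $y \ll \gamma(s)$ for every $s$. By the no past observer horizons condition applied to $\gamma$, $y \in I^+(\gamma) = M$, so $\gamma(r) \ll y$ for some $r$. Transitivity of $\ll$ yields $\gamma(r) \ll y \ll \gamma(r)$, a closed timelike curve, contradicting chronology (which is implied by causality). Hence $A = \emptyset$ and regularity holds.

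For the $C^1$ property, given $p \in M$ and small $\epsilon > 0$, I would take $\CS_\epsilon$ to be the topological boundary of the open (by lower semicontinuity of $\tau_V$) set $\{\tau_V > \epsilon\}$. This is an achronal $C^0$ hypersurface which, by regularity, is met by every past-inextendible causal curve that reaches $\{\tau_V > \epsilon\}$, so it qualifies as a future Cauchy surface. Theorem~\ref{thm:main} then applies to the subspacetime $(I^+(\CS_\epsilon), g)$, giving that $\tau_V^{I^+(\CS_\epsilon)}(q) := \Vol(I^-(q) \cap I^+(\CS_\epsilon))$ is a $C^1$ temporal function there. For $q \in I^+(\CS_\epsilon)$ one has
\[
\tau_V(q) - \tau_V^{I^+(\CS_\epsilon)}(q) = \Vol\bigl(I^-(q) \setminus I^+(\CS_\epsilon)\bigr) \leq \Vol\bigl(M \setminus I^+(\CS_\epsilon)\bigr),
\]
and the right-hand side tends to $0$ as $\epsilon \to 0$ by regularity.

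The main obstacle is upgrading this pointwise smallness to $C^1$ smallness of the correction, locally uniformly in $q$. Its differential at $q$ should be a flux integral over $\partial I^-(q)$ restricted to the thin region $\{\tau_V \leq \epsilon\}$; one needs this to vanish as $\epsilon \to 0$, independently of $q$ in compact regions. This is where the no past observer horizons condition is used beyond regularity: it guarantees that the null generators of $\partial I^-(q)$ are uniformly captured by the level sets $\CS_\epsilon$, rather than escaping along an observer horizon. With this estimate, $\tau_V$ is a local $C^1$ limit of the $\tau_V^{I^+(\CS_\epsilon)}$, and thus $C^1$ itself.
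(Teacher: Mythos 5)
Your regularity argument is correct, and in fact more direct than the paper's (which routes through global hyperbolicity and Proposition~\ref{prop:1}): under NPOH, any $y\in\bigcap_s I^-(\gamma(s))$ would lie on a closed timelike curve, so the intersection is empty and continuity from above of the finite measure gives $\tau_V\circ\gamma\to 0$. The genuine gap is in the $C^1$ part. Your plan is to realize $\tau_V$ as a local $C^1$ limit of the functions $\tau_V^{I^+(\CS_\epsilon)}$, but the decisive estimate --- that the correction $c_\epsilon(q)=\Vol\bigl(I^-(q)\setminus I^+(\CS_\epsilon)\bigr)$ is $C^1$ with locally uniformly small differential --- is exactly what you do not prove, and it is not a technicality. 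You cannot even assert that $c_\epsilon$ is differentiable without circularity, since $c_\epsilon=\tau_V-\tau_V^{I^+(\CS_\epsilon)}$ and differentiability of $\tau_V$ is the goal. Worse, Example~\ref{ex:notC1} shows that the contribution to $\tau_V$ from a thin region near the past boundary can destroy even local Lipschitz continuity, so the assertion that this correction is harmlessly small in $C^1$ would need NPOH in a quantitative form that your sketch (``null generators uniformly captured'') does not supply.

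The paper avoids the limiting procedure altogether. The real content of NPOH here (Lemma~\ref{lem:NPOH}, quoted from \cite{GaZe}) is that $(M,g)$ is globally hyperbolic with \emph{compact} Cauchy surfaces and that every point $p$ has an \emph{entire} Cauchy surface $\CS$ contained in $I^-(p)$. By openness of chronological pasts and compactness of $\CS$, the same $\CS$ lies in $I^-(q)$ for all $q$ in a neighborhood of $p$, whence
\begin{equation*}
  \tau_V(q) = \Vol\bigl(I^-(q)\cap I^+(\CS)\bigr) + \Vol\bigl(I^-(\CS)\bigr),
\end{equation*}
so the below-$\CS$ contribution is exactly \emph{constant} near $p$, not merely small; this is Corollary~\ref{cor:2}, and Theorem~\ref{thm:main} applied to $(I^+(\CS),g)$ then gives $C^1$ temporality on the nose. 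To repair your argument, replace the family of level sets $\CS_\epsilon$ by a single compact Cauchy surface sitting inside $I^-(p)$ --- that existence statement is the step your proposal is missing.
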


Note that $\CS$ being future Cauchy implies that it is a codimension $1$ submanifold of class $C^0$, and we are requiring no further differentiability. That $\tau_V$ is $C^1$ implies that $\df \tau_V$ is a continuous one-form field, and we can thus rewrite the metric tensor $g$ as
\begin{equation*}
    g = -\beta (\dd \tau_V)^2 + h,
\end{equation*}
where $\beta$ is a $C^0$ function, and $h$ induces a $C^0$ Riemannian metric on the level sets of $\tau_V$ and vanishes on their $g$-orthogonal complement. We can thus flip the minus sign to obtain the Wick-rotated (or Riemannianized) metric
\begin{equation*}
    g_R := \beta (\dd \tau_V)^2 + h.
\end{equation*}
Continuous metric tensors have been extensively studied. In the Lorentzian case, causal bubbles can appear \cite{ChGr12}, but this is not a concern here, since the causal structure is that of the smooth metric $g$. In the Riemannian case, the usual notions of length and distance are still well-behaved \cite{Bur15}.

One can find many Riemannian metrics on any manifold, and even require properties such as smoothness or completeness, which $g_R$ generally lacks. The important point here is that $g_R$ is constructed only out of the Lorentzian metric $g$, without making any arbitrary choices. This makes it well suited for the study of convergence of sequences $(M_i,g_i)$ of spacetimes via any of the established notions of convergence for metric spaces (e.g.\ Gromov--Hausdorff). A related approach is that of the null distance by Sormani and Vega\footnote{See \cite{SoVe16} and later works. In particular, \cite[Rem.~2.15]{SaSo25} explains that the original idea was to perform Wick rotation with respect to the cosmological time function $\tau$. In general, however, given that $\tau$ is only differentiable almost everywhere, the resulting $g_R$ is also only defined almost everywhere.}. The null distance $\hat{d}_\tau$ also depends on a choice of time function $\tau$, but is able to deal with non-differentiable ones, and can encode the causality of the spacetime in certain cases \cite{SaSo23,BuGH24}. Even if $\tau$ is temporal, $\hat{d}_\tau$ does not coincide with the distance induced by $g_R$, but the two are closely related. Returning to Wick-rotated metric tensors $g_R$, a study of their convergence has recently been performed by Burgos, Flores, and S\'anchez \cite{BFS25}.

\section{Preliminaries}

Throughout the paper, $(M,g)$ will be a $(n+1)$-dimensional oriented and time-oriented Lorentzian manifold. Except where specified otherwise, we assume the metric tensor $g$ to be of regularity $C^{2,1}$, so that all of standard causality theory and the results of \cite{CGM16} are applicable. We denote by $L_g$ and $d_g$ the Lorentzian length and distance, respectively. By $\vol$ or $\vol^{n+1}$ we denote the canonical volume $(n+1)$-form associated to $g$, and by $\Vol(A) := \int_A \vol$ the volume of a set $A \subset M$.

We assume some familiarity with spacetime geometry, but recall the following, perhaps less well-known definition.

\begin{definition} \label{def:fCS}
    A closed, acausal, edgeless subset $\CS \subset M$ is called a \emph{future Cauchy surface} if every past-directed past-inextendible causal curve $\gamma \colon [0,\infty) \to M$ with $p \in I^+(\CS)$ intersects $\CS$ exactly once.
\end{definition}

The defining condition can be equivalently stated as the future Cauchy horizon $H^+(\CS)$ being empty. The usual Cauchy surfaces are a special case (they are exactly those sets which are simultaneously future and past Cauchy). 

Recall that the \emph{cosmological time function} of Andersson, Galloway, and Howard is defined as
\begin{equation*}
 \tau(p) := \sup \{ L_g(\gamma) \mid \gamma \text{ past-directed causal starting at } p\},
\end{equation*}
and called \emph{regular} \cite[Def.~1.1]{AGH98} if it satisfies:
\begin{itemize}
    \item For all points $p \in M$, $\tau(p) < \infty$.
    \item For all past-directed past-inextendible causal curves $\gamma \colon [0,\infty) \to M$, $\tau \circ \gamma(s) \to 0$ as $s \to \infty$.
\end{itemize}
It enjoys the following properties.

\begin{theorem}[{\cite[Thm.~1.2 \& Cor.~2.6]{AGH98}}] \label{thm:timereg}
Suppose $(M,g)$ is a spacetime with regular cosmological time function $\tau$. Then the following properties hold:
\begin{enumerate}
 \item $(M,g)$ is globally hyperbolic.
 \item $\tau$ is continuous and is strictly increasing along future directed causal curves.
 \item The level sets of $\tau$ are future Cauchy.
 \item For each $p\in M$ there is a past-directed unit-speed timelike geodesic $\gamma_p:[0,\tau(p)) \to M$ such that
 \begin{equation*}
  \gamma_p(0)=p,\qquad \tau(\gamma_p(t))=\tau(p) - t,\quad \text{for all } t\in [0,\tau(p)).
 \end{equation*}
 \item The tangent vectors  $\dot\gamma_p(0)$ are locally bounded away from the lightcones. More precisely, if $K\subseteq M$ is compact then $\{\dot\gamma_p(0):p\in K\}$ is compact in the tangent bundle $TM$.
 \item $\tau$ is locally Lipschitz and its first and second derivatives exist almost everywhere.
\end{enumerate}
\end{theorem}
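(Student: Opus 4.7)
The plan is to establish the six properties in an order that lets each step build on the previous: first the time-function character of $\tau$ in (ii)--(iii), then global hyperbolicity (i), then the existence and compactness of maximizing geodesics (iv)--(v), and finally the regularity (vi). Lower semicontinuity of $\tau$ is elementary: given a past causal curve $\gamma$ from $p$ of length close to $\tau(p)$, one prepends short connectors from any nearby $p_n$ to a slightly later point on $\gamma$ to produce past curves from $p_n$ of nearly the same length, whence $\liminf \tau(p_n) \geq \tau(p)$. Upper semicontinuity is where the regularity hypothesis genuinely enters: if $p_n \to p$ with $\tau(p_n) \to T > \tau(p)$, one takes near-maximizing past curves $\gamma_n$ from $p_n$, applies the limit-curve lemma to extract a past causal limit $\gamma$ from $p$, and uses past inextendibility of $\gamma$ combined with the regularity condition $\tau\circ\gamma(s) \to 0$ to force $L_g(\gamma) \leq \tau(p)$, contradicting the upper bound on $\limsup L_g(\gamma_n)$. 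Strict monotonicity along future causal curves is obtained by combining the reverse triangle inequality $\tau(q) \geq \tau(p) + L_g(\gamma)$ (immediate from the definition) in the timelike case with a push-up argument in the null case. Part (iii) then follows at once: along any past-inextendible causal curve $\tau$ decreases strictly from $\tau(p)$ to $0$, so it crosses each level exactly once.

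Having (ii), I turn to global hyperbolicity (i). Continuity of $\tau$ already implies stable causality, so what remains is compactness of the causal diamonds $J^+(p)\cap J^-(q)$. Monotonicity of $\tau$ confines each diamond to the slab $\{r : \tau(p) \leq \tau(r) \leq \tau(q)\}$, on which $\tau$ is bounded away from $0$; for any sequence $r_n$ in the diamond, the limit-curve lemma applied to causal connections between $p$, $r_n$, and $q$ yields limit curves that cannot escape into past incompleteness, thanks to the lower bound on $\tau$ provided by regularity. For (iv) and (v), I fix $p$ and take past causal curves $\gamma_n$ from $p$ with $L_g(\gamma_n) \to \tau(p)$. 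The limit-curve lemma in a globally hyperbolic spacetime furnishes a past-inextendible causal limit $\gamma_p$; upper semicontinuity of $L_g$ gives $L_g(\gamma_p) = \tau(p)$, and a standard variational argument upgrades $\gamma_p$ to a unit-speed timelike geodesic on $[0,\tau(p))$. Property (v) is then a compactness claim on the initial velocities: were the $\dot\gamma_{p_n}(0)$ to approach a null direction for some sequence $p_n \in K$, extracting a subsequential limit $p \in K$ would give $\dot\gamma_p(0)$ null, hence $\gamma_p$ a null geodesic of zero length, contradicting $L_g(\gamma_p) = \tau(p) > 0$.

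Finally, for (vi), the local Lipschitz bound combines the reverse triangle inequality $\tau(q) \geq \tau(p) + d_g(p,q)$ for $p \ll q$ with the uniform non-null control on maximizing tangent vectors from (v): this control gives a local comparison between $d_g$ and any background Riemannian distance on compact sets, yielding a two-sided Lipschitz bound. Rademacher's theorem then yields differentiability almost everywhere. The most delicate step, and the main obstacle, is twice-differentiability almost everywhere: this requires a local semi-concavity property for $\tau$, obtained by representing $\tau$ locally as a supremum of smooth auxiliary functions built from the exponential map along the maximizing geodesic rays, followed by Alexandrov's theorem for semi-concave functions. Bootstrapping from the purely analytic regularity hypothesis (a condition on limits along past curves) all the way to a quantitative semi-concavity estimate is what makes this step delicate, since semi-concavity is not a formal consequence of Lipschitz regularity, and the necessary control on the exponential map of the maximizing geodesics relies crucially on the compactness established in (v).
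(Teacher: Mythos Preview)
The paper does not prove this theorem at all: it is quoted verbatim from Andersson--Galloway--Howard \cite{AGH98} (Theorem~1.2 and Corollary~2.6 there) and used as background. So there is no ``paper's own proof'' to compare against; your proposal is an attempt to reconstruct the original AGH98 argument rather than anything in the present paper.

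As a sketch of the AGH98 proof, your outline is broadly on target---the logical order (continuity $\Rightarrow$ time function $\Rightarrow$ global hyperbolicity $\Rightarrow$ existence of $\tau$-rays $\Rightarrow$ Lipschitz/semi-concavity) matches their development, and the identification of semi-concavity plus Alexandrov's theorem as the key to (vi) is exactly right. A few points would need tightening if this were to stand as a proof. In your upper-semicontinuity step the sentence ``to force $L_g(\gamma)\le\tau(p)$, contradicting the upper bound on $\limsup L_g(\gamma_n)$'' is garbled: what you actually want is upper semicontinuity of Lorentzian length to give $L_g(\gamma)\ge\limsup L_g(\gamma_n)\ge T$, which then contradicts $L_g(\gamma)\le\tau(p)$; the regularity hypothesis enters to guarantee the limit curve does not lose length at infinity. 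For (i), confinement to a $\tau$-slab is not by itself compactness; one still needs the limit-curve argument together with the lower bound $\tau\ge\tau(p)>0$ on the diamond to rule out escape to the past boundary. For (v), the limit of the $\gamma_{p_n}$ need not be \emph{the} $\gamma_p$ (there may be several $\tau$-realizers at $p$), so the contradiction has to be phrased for an arbitrary limit geodesic, which is still maximizing and hence timelike. These are refinements rather than genuine obstructions; the overall strategy is the one AGH98 use.
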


A similar result exists for \emph{regular cosmological volume functions} \cite[Def.~6.1]{GH23},
\begin{equation*}
    \tau_V(p) := \Vol\left(I^-(p)\right),
\end{equation*}
\emph{regularity} again meaning that $\tau_V$ is finite and $\tau_V \to 0$ along all past-inextendible causal curves. On globally hyperbolic spacetimes, the second requirement is redundant.
\begin{proposition}[Prop.~1 in \cite{GaGa25}] \label{prop:1}
    Let $(M,g)$ be a globally hyperbolic spacetime with $\tau_V(p) < \infty$ for every $p \in M$. Then $\tau_V$ is regular.
\end{proposition}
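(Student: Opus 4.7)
The plan is to exploit the monotonicity of the family of sets $I^-(\gamma(s))$ along a past-directed past-inextendible causal curve $\gamma\colon[0,\infty)\to M$, combine it with continuity of the measure $\Vol$ from above, and then rule out any residual mass via a compactness argument that uses global hyperbolicity.

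The first step is to observe that for $s_1\leq s_2$ we have $\gamma(s_2)\in J^-(\gamma(s_1))$, hence $I^-(\gamma(s_2))\subseteq I^-(\gamma(s_1))$. So $\{I^-(\gamma(s))\}_{s\geq 0}$ is a decreasing family of measurable sets whose largest member has finite volume by hypothesis. Continuity of the measure from above, applied to a sequence $s_n\to\infty$ and extended to the continuous parameter via monotonicity of $\tau_V\circ\gamma$, then yields
\[
\lim_{s\to\infty} \tau_V(\gamma(s)) \;=\; \Vol\Bigl(\,\bigcap_{s\geq 0} I^-(\gamma(s))\,\Bigr).
\]

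The second step is to show that $A:=\bigcap_{s\geq 0}I^-(\gamma(s))$ is empty, which will give $\Vol(A)=0$ and finish the proof. Suppose, for contradiction, that some $q\in A$. Then $\gamma(s)\in I^+(q)$ for every $s$, and trivially $\gamma(s)\in J^-(\gamma(0))$, so the image of $\gamma$ is imprisoned in the diamond $J^+(q)\cap J^-(\gamma(0))$, which is compact by global hyperbolicity. But on a globally hyperbolic spacetime, a past-inextendible causal curve cannot be imprisoned in a compact set: a Cauchy temporal function tends to $-\infty$ along it, contradicting its boundedness on compacta. (Alternatively, a standard limit curve argument produces a past endpoint for $\gamma$, contradicting past-inextendibility.)

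The main obstacle, and the only step that uses real spacetime geometry, is this non-imprisonment property; the rest of the argument is an application of monotone continuity of the measure and elementary properties of $I^\pm$. Since non-imprisonment in globally hyperbolic spacetimes is a classical fact, the proof essentially reduces to piecing these ingredients together, and no new analytic input is needed beyond the hypothesis that $\tau_V$ be everywhere finite (which supplies the integrability needed for the measure-theoretic continuity from above).
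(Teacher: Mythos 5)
Your argument is correct: monotonicity of $s\mapsto I^-(\gamma(s))$, continuity of the measure from above (legitimate here because $\Vol(I^-(\gamma(0)))<\infty$ by hypothesis), and emptiness of $\bigcap_s I^-(\gamma(s))$ via non-imprisonment of past-inextendible causal curves in the compact diamond $J^+(q)\cap J^-(\gamma(0))$ together give $\tau_V\circ\gamma\to 0$, which is exactly what regularity requires beyond finiteness. The paper does not reprove this proposition (it is quoted from \cite{GaGa25}), but your proof is the standard argument and matches the same non-imprisonment reasoning the paper uses elsewhere (e.g.\ in Lemma~\ref{lem:tCS}), so nothing further is needed.
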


The following parts of Theorem~\ref{thm:timereg} hold also for $\tau_V$.

\begin{theorem}[{\cite[Thms.~6.3 \& 6.4]{GH23}}]
 Let $(M,g)$ be a spacetime with regular cosmological volume function $\tau_V$. Then
 \begin{enumerate}
 \item $(M,g)$ is globally hyperbolic.
 \item $\tau_V$ is continuous and is strictly increasing along future directed causal curves.
 \item The level sets of $\tau_V$ are future Cauchy.
\end{enumerate}
\end{theorem}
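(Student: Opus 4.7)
The plan is to mirror the Andersson--Galloway--Howard proof of Theorem~\ref{thm:timereg} for $\tau$, replacing length-based arguments by volume-theoretic ones. Throughout I use that regularity precludes closed timelike curves: along any such loop, $\tau_V$ would be simultaneously bounded below by a positive value and forced to vanish in the past limit, contradicting either finiteness or regularity.

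For the strict monotonicity in (ii), take $p \neq q$ with $p < q$ via a future-directed causal curve. Chronology gives $q \not\ll p$, so $q \notin I^-(p)$ and by openness of $I^-(p)$ some neighborhood $V$ of $q$ is disjoint from $I^-(p)$. Inside $V$, pick $r \ll q$ using the local structure of $I^-(q)$ near $q$. A small open neighborhood of $r$ then sits in $I^-(q) \setminus I^-(p)$ and has positive volume, giving $\tau_V(p) < \tau_V(q)$. For continuity, fix $p_n \to p$. Lower semicontinuity $\tau_V(p) \leq \liminf_n \tau_V(p_n)$ follows by applying Fatou's lemma to $\chi_{I^-(p_n)}$: any $x \ll p$ satisfies $x \ll p_n$ for large $n$ by openness of the chronology relation. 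For the reverse, the limit curve lemma gives $\limsup_n I^-(p_n) \subseteq \overline{I^-(p)}$, and since $\partial I^-(p)$ is a $C^0$ achronal hypersurface it has $(n{+}1)$-dimensional Lebesgue measure zero; dominated convergence then yields $\limsup_n \tau_V(p_n) \leq \tau_V(p)$, with integrable majorant $\chi_{I^-(p')}$ for any chosen $p' \gg p$ (finite-volume by regularity).

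For (iii), level sets of $\tau_V$ are closed by continuity, acausal by strict monotonicity, and edgeless by applying the intermediate value theorem to any timelike curve traversing a neighborhood from $\tau_V < c$ to $\tau_V > c$. The future Cauchy property is likewise an IVT argument: along a past-inextendible past-directed causal curve based in $I^+(\{\tau_V = c\})$, the value $\tau_V$ starts above $c$ and tends to $0 < c$ by regularity, so it attains $c$, uniquely thanks to the strict monotonicity. For (i), strong causality is immediate from $\tau_V$ being a continuous function strictly increasing along causal curves (Hawking's criterion for stable causality, which implies strong causality). Compactness of $J^+(p) \cap J^-(q)$ follows from a limit curve argument: any sequence in the diamond has $\tau_V \in [\tau_V(p), \tau_V(q)]$, and a would-be escape to a past-inextendible limit curve would force $\tau_V \to 0$ along it, contradicting the lower bound $\tau_V(p) > 0$.

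The main obstacle I expect is establishing continuity in (ii), where one must simultaneously handle the set-valued (semi)continuity of $p \mapsto I^-(p)$, the measure-zero property of its boundary, and the existence of a local volume majorant justifying passage to the limit under the integral. Once continuity is in hand, both the Cauchy-surface property and global hyperbolicity follow by soft arguments that merely translate the IVT and limit curve techniques already used in \cite{AGH98}.
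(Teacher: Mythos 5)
First, a point of reference: the paper does not prove this theorem at all — it is quoted verbatim from \cite{GH23}, so there is no in-paper argument to match yours against. Judged on its own, your architecture (chronology by winding a closed timelike curve into a past-inextendible one, monotonicity via an open subset of $I^-(q)\setminus I^-(p)$, continuity via semicontinuity of $p\mapsto I^-(p)$ plus the measure-zero achronal boundary, IVT for the future-Cauchy property, limit curves for compact diamonds) is the standard and correct one. However, in two places you assert as automatic precisely the causality input that the regularity hypothesis has to supply, and these are genuine gaps rather than omitted routine details.

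The first is in strict monotonicity: from $q\notin I^-(p)$ and openness of $I^-(p)$ you conclude that a whole neighborhood of $q$ misses $I^-(p)$. That requires $q\notin\overline{I^-(p)}$, and a priori $q$ may lie on $\partial I^-(p)$ even when $p\le q$ and the spacetime is chronological: the configuration $q\in\overline{I^-(p)}\cap J^+(p)$ only produces almost-closed causal curves near $q$, i.e.\ a failure of strong causality, not of chronology. So you must either first extract strong causality (or past distinction) from regularity by a limit-curve argument, or argue measure-theoretically: since $I^-(p)=\mathrm{int}\,\overline{I^-(p)}$ for any point, $\tau_V(p)=\tau_V(q)$ together with $I^-(p)\subseteq I^-(q)$ forces $I^-(p)=I^-(q)$, and this degenerate case must then be excluded using regularity. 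The second gap is the claim that the limit curve lemma gives $\limsup_n I^-(p_n)\subseteq\overline{I^-(p)}$. This is outer semicontinuity of $I^-$ (essentially past reflectivity), which fails in general strongly causal spacetimes and is not a consequence of the limit curve lemma alone. Applied to timelike curves from a fixed $x$ to $p_n\to p$, the lemma yields \emph{either} a causal curve from $x$ to $p$ (whence $x\in\overline{I^-(p)}$) \emph{or} a past-inextendible limit curve ending at $p$ along which, after a push-up argument, $\tau_V$ is bounded below by $\tau_V(x)>0$; it is regularity that rules out the second alternative, and this step must be written out. You flag continuity as the main obstacle, which is fair, but the sketch does not close it. The remaining components — Fatou for lower semicontinuity, the measure-zero boundary, the IVT arguments for \textit{(iii)}, and the compactness of diamonds (which correctly combines the already-established continuity with regularity) — are sound.
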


Points (i)-(iii) of both theorems are identical. Points (iv) and (v) do not make sense for the cosmological volume function, since $\tau_V(p)$ is, by definition, realized by the set $I^-(p)$. Example \ref{ex:notC1} below has a non-Lipschitz $\tau_V$. Thus (vi) is not true for $\tau_V$ in general, but Theorem~\ref{thm:main} and Corollary~\ref{cor:main} establish that $\tau_V$ is even $C^1$ and temporal in many cases of interest. Recall that any function satisfying (ii) is called a \emph{time function}, while the notion of \emph{temporal function} is stronger, requiring differentiability with past-directed timelike gradient.

\section{Differentiability of the cosmological volume function}

In this section, we prove the main result and some corollaries, and provide related examples.

\subsection{Proof of Theorem~\ref{thm:main}}

Recall that we want to study $\tau_V$ on $I^+(\CS)$ for a Cauchy surface $\CS$, so in other words, the function
\begin{equation*}
    \tau_V(p) = \Vol(I^-(p) \cap I^+(\CS)) = \int_{I^-(p) \cap I^+(\CS)} \vol.
\end{equation*}
We may restrict our spacetime to the interior of the domain of dependence\footnote{See e.g.\ \cite[Chap.~6]{HaEl73} for details on domains of dependence.} of $\CS$, without altering $\tau_V$. This way, without loss of generality, we assume that $(M,g)$ is globally hyperbolic and $\CS$ is a true Cauchy surface.  Regularity of $\tau_V$ is immediate: for every $p \in I^+(\CS)$, the set $I^-(p) \cap I^+(\CS)$ is relatively compact \cite[Prop.~8]{Ger70}, hence of finite volume, and on globally hyperbolic spacetimes, finiteness of $\tau_V$ already implies regularity by Proposition~\ref{prop:1}.

Our proof of differentiability relies heavily on that of the related result for weighted measures by Chruściel, Grant, and Minguzzi \cite{CGM16}. Since the goal is to differentiate an integral, one can guess the expected outcome using the Leibniz integral rule\footnote{This is not exactly how it is presented in \cite{CGM16}. For a reference on the Leibniz integral rule for differential forms, see \cite{Fla73}.}
\begin{equation*}
    \frac{\dd }{\dd s} \biggr\vert_{s=0} \int_{\Omega(s)} \omega(s) = \int_{\Omega(0)} \frac{\dd }{\dd s} \biggr\vert_{s=0} \omega(s) + \int_{\partial \Omega (0)} V \lrcorner \omega(0).
\end{equation*}
Here $V$ denotes the ``velocity" of the boundary. In our case, $\omega = \vol$ is independent of the parameter $s$, so only the boundary term is relevant, and $\Omega(s) = I^-(\gamma(s))$ for some curve $\gamma$. Since $\partial I^-(\gamma(s))$ is ruled by null geodesics emanating from $\gamma(s)$, we may obtain the velocity $V$ as a Jacobi field, given that the solutions of the geodesic equation depend smoothly on initial data. Care has to be taken when dealing with the null cut locus, but since it has measure zero, it ultimately does not contribute\footnote{This is in contrast with the regular cosmological time function $\tau$, where the timelike cut locus leads to lack of differentiability. See also Example~\ref{exam1}.}. In any case, we avoid part of these technical details by referring to \cite{CGM16}, but some work is still required to reduce our problem to their setting.

Let us begin by introducing some notation, following \cite{CGM16}. We fix an auxiliary complete Riemannian metric $h$ on $M$. By $N_h^-I^+(\CS)$, we denote the bundle of past-directed lightlike $h$-unit vectors with basepoint in $I^+(\CS)$, and by $N_h^-(p)$ we denote its fiber over a point $p$ (i.e.\ the $h$-unit past lightcone at $p$). We identify each element $X \in N_h^-I^+(\CS)$ with the unique null geodesic $\gamma$ parametrized by $h$-arclength such that $\dot\gamma(0) = X$. The first step is to ensure that the domain of integration varies continuously with respect to the parameter.

\begin{lemma} \label{lem:tCS}
    The function $t_\CS \colon N_h^- I^+(\CS) \to [0,\infty)$ given by
    \begin{equation}
        t_\CS(\dot\gamma) := \{ t \in [0,\infty) \mid \gamma(t) \in \CS \}
    \end{equation}
    is continuous.
\end{lemma}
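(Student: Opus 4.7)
The plan is to combine continuous dependence of the geodesic flow on initial data with the Cauchy property of $\CS$. Since the proof has already reduced to the globally hyperbolic case with $\CS$ a true Cauchy surface, I have the disjoint decomposition $M = I^+(\CS) \sqcup \CS \sqcup I^-(\CS)$ with $I^\pm(\CS)$ open and $\CS$ closed. Reading $\gamma \in N_h^-(p) \subset N_h^- I^+(\CS)$ also as the past-directed null geodesic it generates, $\gamma$ is past-inextendible in our spacetime, so being a past-inextendible causal curve from $I^+(\CS)$, it must meet $\CS$ at least once, and acausality of $\CS$ forces the crossing to occur at exactly one parameter $t_* := t_\CS(\gamma)$. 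Moreover $\gamma([0, t_*)) \subset I^+(\CS)$, and $\gamma(t) \in I^-(\CS)$ for $t \in (t_*, t_* + \delta]$ with some $\delta > 0$, since $\gamma$ extends past $t_*$ thanks to the reduction to the interior of the domain of dependence.

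For continuity, let $\gamma_n \to \gamma$ in the bundle. Since $g$ is $C^{2,1}$, standard ODE theory for the geodesic equation gives continuous dependence on initial data: on any compact interval $[0, T]$ where $\gamma$ is defined, $\gamma_n$ is defined for $n$ large and $\gamma_n \to \gamma$ uniformly on $[0, T]$. I would prove the two semicontinuities separately. For $\liminf t_\CS(\gamma_n) \ge t_*$: if some subsequence has $t_\CS(\gamma_{n_k}) \to t_\infty < t_*$, then by continuous dependence $\gamma_{n_k}(t_\CS(\gamma_{n_k})) \to \gamma(t_\infty) \in I^+(\CS)$, contradicting that the sequence lies in $\CS$, which is closed and disjoint from the open set $I^+(\CS)$.

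For $\limsup t_\CS(\gamma_n) \le t_*$: if $t_\CS(\gamma_{n_k}) \to t_\infty > t_*$, pick $\delta \in (0, t_\infty - t_*)$ so that $\gamma(t_* + \delta) \in I^-(\CS)$. Continuous dependence yields $\gamma_{n_k}(t_* + \delta) \in I^-(\CS)$ for large $k$, while $\gamma_{n_k}(0) \in I^+(\CS)$ and $\gamma_{n_k}(t_\CS(\gamma_{n_k})) \in \CS$ with $t_\CS(\gamma_{n_k}) > t_* + \delta$. By connectedness, $\gamma_{n_k}$ must also cross $\CS$ at some earlier parameter $s_{n_k} \in (0, t_* + \delta)$, yielding two distinct intersections of the causal curve $\gamma_{n_k}$ with $\CS$; this contradicts acausality. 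The main delicate point is this upper-bound step: it requires the limiting geodesic $\gamma$ to extend a little past $\CS$ into $I^-(\CS)$, and this is exactly why reducing to the interior of the domain of dependence at the start of the proof is essential.
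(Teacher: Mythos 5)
Your proof is correct, and while it shares the essential ingredients with the paper's argument --- continuous dependence of the null geodesics on their initial data, closedness of $\CS$, and the fact that acausality forces a unique crossing --- the way you control the crossing times from above is genuinely different. The paper first establishes a uniform upper bound on $t_\CS(\gamma_i)$ by trapping the relevant curve segments in the compact set $J^-(p)\cap J^+(\CS)$ and invoking non-total imprisonment, and then shows that every convergent subsequence of crossing times limits to a parameter at which $\gamma$ meets the closed set $\CS$, hence to $t_\CS(\gamma)$ by uniqueness. You instead split into two semicontinuities: your lower bound is essentially the paper's closedness argument, while your upper bound rules out large subsequential limits (including $+\infty$, so you need no separate boundedness step) by noting that $\gamma$ enters $I^-(\CS)$ immediately after $t_*$, so a nearby geodesic crossing $\CS$ only at a much later parameter would, by the decomposition $M = I^+(\CS)\sqcup\CS\sqcup I^-(\CS)$ and connectedness, have to meet $\CS$ twice, contradicting acausality. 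This buys independence from the imprisonment argument at the price of leaning on the full three-way decomposition of $M$, which is available only after the reduction to the interior of the domain of dependence --- a dependence you correctly identify as essential.
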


\begin{proof}
    Let $\gamma_i \colon [0,\infty) \to M$ be a sequence of past-directed null half-geodesics parametrized by $h$-arclength such that $\gamma_i(0) \to \gamma(0)$ and $\dot\gamma_i(0) \to \dot\gamma(0)$ as $i \to \infty$. First, we show that $t_\CS$ is uniformly bounded above along the sequence $\gamma_i$. To this end, let $p$ be such that $\gamma(0) \in I^-(p)$. Then, since $I^-(p)$ is open, also $\gamma_i(0) \in I^-(p)$ for all $i$ large enough. By global hyperbolicity, the set $K = J^-(p) \cap J^+(\CS)$ is compact, and $(M,g)$ is non-totally imprisoning. Hence, the $h$-length of causal curves in $K$ is uniformly bounded above. This puts an upper bound on $t_\CS$.

    Continuity now follows if we show that for every subsequence $\gamma_{i_j}$, such that $t_j := t_\CS(\gamma_{i_j})$ converges as $j \to \infty$, we have $t_\infty := \lim_{j\to\infty} t_j = t_\CS(\gamma)$. It is easy to see that for such a subsequence $\gamma_{i_j} (t_j) \to \gamma(t_\infty)$ as $j \to \infty$. Since $\CS$ is closed, and by definition of $t_j$, $\gamma_{i_j} (t_j) \in \CS$, also $\gamma(t_\infty) \in \CS$. Therefore $t_\infty = t_\CS(\gamma)$. 
\end{proof}

Next, we define the candidate derivative, and prove that it is continuous. For this, some more notation is needed. Let $E^-(p) := J^-(p) \setminus I^-(p)$. It is well-known that $E^-(p)$ is ruled by past-directed maximizing null geodesics $\gamma$ starting from $p$. For these, the map
\begin{equation*}
    t_-(\gamma) := \sup \{ s \in \R \mid d_g(\gamma(s),\gamma(0)) = 0  \} 
\end{equation*}
indicates until which parameter value $\gamma$ stays length-maximizing. Moreover, we denote by
\begin{equation*}
    \mathring E^-(p) := \{ \gamma(s) \mid \gamma \text{ a past-directed null geodesic, } \gamma(0) = p, s \in (0,t_-(\gamma)) \}
\end{equation*}
the set of all points in $E^-(p)$ that are not a past-endpoint of one of the ruling null geodesics (in other words, we remove the null cut locus). The set $\mathring E^-(p)$ is a $C^{1,1}$ hypersurface, since it is the image under the exponential map $\exp$ of a subset of the past lightcone in $T_pM$, on which $\exp$ is $C^{1,1}$ and injective (see also \cite[p.~2804]{CGM16}).

For a vector $Z \in T_pM$, we define $L(Z)$ to be the vector field on $\mathring E^-(p)$ obtained by solving the Jacobi equation along each null geodesic generator, with initial value $Z$ and vanishing initial derivative. Since the Jacobi equation is linear, $L$ is a linear map. Finally, let $\pi \colon TM \to M$ denote the base-point projection from the tangent bundle to $M$. The map $\hat L$ defined in the next lemma, will turn out to be the differential of $\tau_V$. Here the notation $\lrcorner$ means
\begin{equation*}
    L(Z) \lrcorner \vol = \vol(L(Z), \ \cdot\ ,...,\ \cdot\ )
\end{equation*}
which is to be understood as a $n$-form (hence top-dimensional) on $\mathring E^-(\pi(Z)) \cap I^+(\CS)$.

\begin{lemma} \label{lem:Lcont}
    The map
    \begin{align*}
        \hat L \ \colon \ T I^+(\CS) &\longrightarrow \R \\
        Z &\longmapsto \int_{\mathring E^-(\pi(Z)) \cap I^+(\CS)} L(Z) \lrcorner \vol
    \end{align*}
    is continuous.
\end{lemma}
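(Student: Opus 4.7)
The plan is to reduce continuity of $\hat L$ to a dominated-convergence argument by pulling back the defining integral to a sphere-bundle parameter space via the null exponential map. Fixing $Z_0 \in TI^+(\CS)$ with $p_0 := \pi(Z_0)$, for $p$ in a small neighborhood of $p_0$ I would parametrize $\mathring E^-(p) \cap I^+(\CS)$ by $\phi_p\colon (v,s) \mapsto \gamma_v(s)$, where $\gamma_v$ is the past-directed null geodesic with $\dot\gamma_v(0) = v \in N_h^-(p)$. By definition of $\mathring E^-$ and of $I^+(\CS)$, the restriction of $\phi_p$ to the open set
\[
\Omega_p := \bigl\{(v,s) : 0 < s < T(p,v)\bigr\}, \qquad T(p,v) := \min\bigl(t_-(\gamma_v),\, t_\CS(\gamma_v)\bigr),
\]
is an injective immersion onto $\mathring E^-(p) \cap I^+(\CS)$, with the second bound coming from Lemma~\ref{lem:tCS}.

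Next I would compute the pullback explicitly. Because the geodesic $\gamma_v(s)$ and the Jacobi field $L(Z)$ along it both depend smoothly on the initial data $(p,v,s,Z)$, and $\vol$ has continuous coefficients, one obtains
\[
\phi_p^*\bigl(L(Z) \lrcorner \vol\bigr) = f(Z,p,v,s)\, \df\sigma_h(v) \wedge \df s,
\]
where $\df\sigma_h$ is the induced round-sphere form on $N_h^-(p)$ and $f$ is continuous in $(Z,p,v,s)$ and linear in $Z$. To compare the fibers $N_h^-(p)$ for varying $p$, I would invoke local triviality of the past null $h$-unit sphere bundle $N_h^-I^+(\CS) \to I^+(\CS)$ to obtain, on a neighborhood $U$ of $p_0$, a continuous identification $\Psi\colon U \times N_h^-(p_0) \to N_h^-(U)$, and transport everything to the fixed parameter space $N_h^-(p_0) \times \R_+$.

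With this setup, $\hat L(Z)$ becomes the integral over $N_h^-(p_0) \times \R_+$ of a product $\chi_{\Omega_p}(\Psi(p,v),s)\,\tilde f(Z,p,v,s)$, with $\tilde f$ continuous. For a sequence $Z_i \to Z$, the factor $\tilde f$ converges pointwise and is uniformly bounded on compact sets; the argument in the proof of Lemma~\ref{lem:tCS} furnishes a uniform local upper bound for $T(p_i,\cdot)$; and $N_h^-(p_0)$ is compact. Provided $\chi_{\Omega_{p_i}} \to \chi_{\Omega_p}$ almost everywhere, dominated convergence then gives $\hat L(Z_i) \to \hat L(Z)$.

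The main obstacle is precisely this almost-everywhere convergence of the characteristic function, since the cut function $t_-$ has limited regularity. The classical fact that the null cut locus of $p$ has measure zero in $M$ (for a $C^{2,1}$ metric) translates, after transport through the exponential map, to the graph of $t_-$ being a Lebesgue-null subset of the parameter space; away from this graph and the graph of $t_\CS$ (continuous by Lemma~\ref{lem:tCS}), a standard semi-continuity argument gives convergence of $T(p_i,\cdot)$. This technical step is essentially the one already carried out by Chru\'sciel--Grant--Minguzzi in \cite{CGM16}, so the detailed verification can be imported from their proof, closing the argument.
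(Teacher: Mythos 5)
Your proposal is correct and follows essentially the same route as the paper: parametrize $\mathring E^-(p)\cap I^+(\CS)$ over the $h$-unit past null sphere bundle via a local trivialization, integrate along each generator up to $\min(t_-,t_\CS)$, and use continuous dependence of the Jacobi fields on $Z$. The only difference is cosmetic: the paper invokes \cite[Prop.~2.1]{CGM16} for the \emph{continuity} of $t_-$, so the upper limit $b=\min(t_-,t_\CS)$ is uniformly continuous on a compact parameter set and the iterated integral is continuous directly, whereas you route the same facts through an almost-everywhere convergence and dominated-convergence argument; both work.
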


\begin{proof}
    Note that $\mathring E^-(\pi(Z))$ is ruled by null geodesics, and that the null cut locus has $n$-dimensional Lebesgue measure zero. Hence, up to a set of measure zero, we may parametrize the domain of integration on a subset of $\R \times N^-_h(\pi(Z))$. For each initial velocity in the $h$-unit past lightcone $N^-_h(\pi(Z))$, we need to follow the corresponding null geodesic $\gamma \colon [0,\infty) \to M$ (parametrized by $h$-arclength) until it reaches the null cut locus or the Cauchy surface $\CS$, i.e.\ until parameter value $b(\gamma)$, where
    \begin{equation*}
        b \colon N_h^-I^+(\CS) \to [0,\infty),\quad \gamma \mapsto \min \{ t_-(\gamma), t_\CS (\gamma) \}.
    \end{equation*}
    Since $t_-$ and $t_\CS$ are continuous by \cite[Prop.~2.1]{CGM16} and Lemma~\ref{lem:tCS} respectively, also $b$ is continuous. For some fixed but arbitrary $Z \in TM$, choose a local trivialization of $N^-_h I^+(\CS)$ on a relatively compact neighborhood $U$ of $\pi(Z)$,
    \begin{equation*}
        N^-_h U \longrightarrow \mathbb{S}^{n-1} \times U, \quad \gamma \longmapsto \left(\frac{\dot\gamma(0)}{h_{\gamma(0)} (\dot\gamma(0),\dot\gamma(0))}, \gamma(0)\right).
    \end{equation*}
    We can thus write $b = b(\nu,p)$ as a function of variables $\nu \in \mathbb{S}^{n-1}$ and $p \in U$. From (relative) compactness of $U$ and $\mathbb{S}^{n-1}$, it follows that $b$ is uniformly continuous on $\pi^{-1}(U)$ and bounded above by some constant $C$. Moreover, the solutions $L(Z)$ to the Jacobi equation depend continuously on the initial datum $Z$. We may write the integral above as
    \begin{align*}
        \hat L(Z) &= \int_{\mathbb{S}^{n-1}} \int_0^{b(\nu,\pi(Z))} (L(Z) \lrcorner \vol)_{1...n} \, \df t\, \df \nu \\ &= \int_{\mathbb{S}^{n-1}} \int_0^{C} I_Z(t,\nu) (L(Z) \lrcorner \vol)_{1...n} \, \df t\, \df \nu,
    \end{align*}
    where $I_Z(t,\nu) := 1$ if $t \in (0,b(\nu,\pi(Z))$ and $I_Z(t,\nu) := 0$ otherwise. For any converging sequence $Z_n \to Z$, the integrand on the bottom line converges uniformly, so the integral converges. This proves continuity of $\hat L$.
\end{proof}

Finally, we show that the candidate derivative is, indeed, the derivative.

\begin{lemma}
 The function
 \begin{equation*}
     \tau_V (p) := \int_{I^-(p) \cap I^+(\CS)} \vol
 \end{equation*}
 is a $C^1$ temporal function with differential
 \begin{equation*}
     d \tau_V(Z) = \hat L (Z).
 \end{equation*}
\end{lemma}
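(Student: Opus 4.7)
The plan is to verify that for every $p \in I^+(\CS)$ and every $Z \in T_pM$ the directional derivative of $\tau_V$ at $p$ in direction $Z$ equals $\hat L(Z)$. Since $L$ is linear in $Z$, so is $\hat L|_{T_pM}$; combined with the continuity already established in Lemma~\ref{lem:Lcont}, this identifies $d\tau_V$ with the continuous one-form $\hat L$ and hence yields $\tau_V \in C^1$.

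Following \cite[Sec.~2]{CGM16}, I would fix a smooth curve $\sigma:(-\epsilon,\epsilon)\to I^+(\CS)$ with $\sigma(0)=p$ and $\dot\sigma(0)=Z$, and parametrize $I^-(\sigma(s))\cap I^+(\CS)$ via the null exponential map at $\sigma(s)$. Using the local trivialization of $N_h^-I^+(\CS)$ around $p$ and parallel transport along $\sigma$ to identify $N_h^-(\sigma(s))$ with $N_h^-(p)\cong S^{n-1}$, one writes
\begin{equation*}
    \tau_V(\sigma(s)) = \int_{S^{n-1}} \int_0^{b(\nu,\sigma(s))} F(s,t,\nu)\,\dd t\,\dd \nu,
\end{equation*}
where $F$ is a smooth density built from the Jacobian of the null exponential map and $\vol$, and $b(\nu,\sigma(s))$ depends continuously on $s$ by Lemma~\ref{lem:tCS} together with \cite[Prop.~2.1]{CGM16}. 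Differentiation at $s=0$ splits into a bulk term from differentiating $F$ and a boundary term from the moving upper limit. The boundary term vanishes for two independent reasons: on the cut-locus portion $b=t_-$ the Jacobian defining $F$ degenerates, so the integrand is zero there; on the portion $b=t_\CS$ the endpoint sits on the fixed surface $\CS$, so the variation is tangential and contributes nothing. The bulk term encodes the variation of the initial data of the null geodesics, and along each generator this variation is exactly the Jacobi field $L(Z)$ with initial value $Z$ and vanishing covariant derivative. Reassembling the polar coordinates yields $\int_{\mathring E^-(p)\cap I^+(\CS)} L(Z)\lrcorner\vol = \hat L(Z)$, as desired.

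For temporality I would argue that $\hat L(Z)>0$ strictly for every nonzero future-directed causal $Z$, which then forces $d\tau_V$ to be past-directed timelike. For such $Z$ the variation $\sigma(s)=\exp_p(sZ)$ satisfies $I^-(p)\subsetneq I^-(\sigma(s))$ for $s>0$, with the volume of the difference scaling linearly in $s$; strict positivity follows either from a direct geometric lower bound on this difference volume, or equivalently from the observation that $L(Z)$ is transverse to $\mathring E^-(p)$ away from conjugate points, so that $L(Z)\lrcorner\vol$ has a definite sign on the null cone.

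The main obstacle is the rigorous interchange of differentiation and integration over a domain whose moving boundary involves both the null cut locus and the merely $C^0$ surface $\CS$. This will require dominated convergence together with uniform continuity of $b$ on compact subsets of the $h$-unit past lightcone bundle, both available from Lemma~\ref{lem:tCS}, \cite[Prop.~2.1]{CGM16}, and the compactness/trivialization argument already deployed in the proof of Lemma~\ref{lem:Lcont}.
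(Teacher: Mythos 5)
Your overall strategy---differentiate the moving-domain integral directly via a Leibniz-type rule---is the heuristic the paper itself sketches before the proof, but it is precisely the route the paper then \emph{avoids}, and as written your plan has a genuine gap at the point where the new difficulty of this lemma lives, namely the truncation by $\CS$. First, a structural problem: you write $\tau_V(\sigma(s))=\int_{S^{n-1}}\int_0^{b(\nu,\sigma(s))}F\,\dd t\,\dd\nu$ with $\nu$ ranging over the $h$-unit past \emph{light} cone. That parametrization sweeps out the $n$-dimensional set $E^-(\sigma(s))$, not the $(n+1)$-dimensional region $I^-(\sigma(s))\cap I^+(\CS)$, so the identity is dimensionally inconsistent; the double integral over $S^{n-1}\times[0,b]$ is the correct expression for the candidate \emph{derivative} $\hat L(Z)$ (as in Lemma~\ref{lem:Lcont}), not for $\tau_V$ itself. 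Second, and more seriously, the dismissal of the boundary term on the portion where $b=t_\CS$ does not go through with the tools you invoke. Since $\CS$ is merely $C^0$, the function $s\mapsto t_\CS(\nu,\sigma(s))$ is only continuous, so $|t_\CS(\nu,\sigma(s))-t_\CS(\nu,\sigma(0))|=o(1)$ but not $O(s)$; the contribution of the moving upper limit to the difference quotient is of order $|t_\CS(\sigma(s))-t_\CS(\sigma(0))|/s$ and is not controlled by dominated convergence plus uniform continuity of $b$. The statement ``the variation is tangential to $\CS$'' is the right geometric idea, but it lives in the $(n+1)$-dimensional picture: the set removed by the truncation is a sliver concentrated near the codimension-two set $E^-(p)\cap\CS$, and showing its volume is $o(s)$ requires a transversality/measure argument (acausality of $\CS$ versus the null generators of $E^-(p)$) that you do not supply and that your one-dimensional upper-limit bookkeeping cannot see.

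The paper sidesteps all of this with a mollification argument: it multiplies $\vol$ by continuous compactly supported cutoffs $\phi_\varepsilon$ approximating the indicator of $I^-(p')\cap I^+(\CS)$, applies the already-proved weighted result \cite[Lem.~3.1]{CGM16} to get $C^1$ functions $\tau_{\phi_\varepsilon}$ with explicit differentials, writes the fundamental-theorem-of-calculus identity for $\tau_{\phi_\varepsilon}$ along a short curve, and passes to the limit $\varepsilon\to0$ in that \emph{integrated} identity. Differentiability of $\tau_V$ with $\df\tau_V=\hat L$ then follows from continuity of $\hat L$ (Lemma~\ref{lem:Lcont}) and the fundamental theorem of calculus, and no moving-boundary differentiation is ever performed. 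Your linear-in-$s$ lower bound for temporality is fine in spirit (the paper defers this to \cite[p.~2810]{CGM16}), but to make your main computation rigorous you would either need to carry out the codimension-two sliver estimate at $\CS$ honestly, or adopt the paper's limiting argument.
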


\begin{proof}
 Given points $p \in M$, $p' \in I^+(p)$ and a continuous function $f \colon [0,\infty) \to [0,1]$ such that $f(x) = 1$ if $x\leq 1$ and $f(x) = 0$ if $x \geq 2$, we define
 \begin{equation*}
     \phi_\varepsilon(q) := f\left(\frac{d_h\left(q,I^-(p') \cap I^+(\CS)\right)}{\varepsilon}\right).
 \end{equation*}
 Here $d_h$ denotes the distance w.r.t.\ the complete Riemannian metric $h$. The support $\operatorname{supp}(\phi_\varepsilon)$ is compact because
 \begin{equation*}
     \operatorname{supp}(\phi_\varepsilon) \subseteq \left\{ q \in M \mid d_h\left(q,I^-(p') \cap I^+(\CS)\right) \leq 2\varepsilon \right\},
 \end{equation*}
 and $I^-(p') \cap I^+(\CS)$ is compact. Therefore, by \cite[Lem.~3.1]{CGM16}, the functions
    \begin{equation*}
     \tau_{\phi_\varepsilon} (q) := \int_{I^-(q)} {\phi_\varepsilon} \vol
 \end{equation*}
 are $C^1$ with differential
 \begin{equation*}
     \df \tau_{\phi_\varepsilon}(Z) = \int_{\mathring E^-(\pi(Z))} \phi_\varepsilon L(Z) \lrcorner \vol
 \end{equation*}
 (recall also the informal explanation of this fact at the beginning of the section). Hence, for any curve $\gamma$ with $\gamma(0) = p$ and $s_1 < s_2$ small enough so that $\gamma([s_1,s_2]) \subset I^-(p')$, we have
 \begin{equation*}
     \tau_{\phi_\varepsilon} (\gamma(s_2)) - \tau_{\phi_\varepsilon} (\gamma(s_1)) = \int_{s_1}^{s_2} \left( \int_{\mathring E^-(\gamma(s))} \phi_\varepsilon L(\dot \gamma(s)) \lrcorner \vol \right)\df s.
 \end{equation*}
 Take the limit $\varepsilon \to 0$ on both sides, noticing that the $\phi_\varepsilon$ converge to the indicator function of $I^-(p') \cap I^+(\CS)$ and that $\mathring E^-(\gamma(s)) \subset I^-(p')$, to obtain
 \begin{equation} \label{eq:difference}
     \tau_V (\gamma(s_2)) - \tau_V (\gamma(s_1)) = \int_{s_1}^{s_2} \left( \int_{\mathring E^-(\gamma(s)) \cap I^+(\CS)} L(\dot \gamma(s)) \lrcorner \vol \right) \df s.
 \end{equation}
 By Lemma~\ref{lem:Lcont}, the integrand is continuous in $s$, and the result now follows from the fundamental theorem of Calculus.
 
 Finally, we prove that $\df \tau_V$ is timelike (in particular, non-vanishing). This can be done along the same lines as in \cite[p.~2810]{CGM16}: From \eqref{eq:difference} it follows that for $X \in T_pM$,
 \begin{equation} \label{eq:dtauV}
     \df \tau_V (X) = \int_{\mathring E^-(p) \cap I^+(\CS)} L(X) \lrcorner \vol.
 \end{equation}
 It suffices to show that this expression is positive for all $X$ future-directed causal. Locally on $\mathring E^-(X) \cap I^+(\CS)$, we may choose double null coordinates $u,v$ such that $\partial_u$ is tangent to the null generators of $\mathring E^-(\gamma(s))$ and
 \begin{equation*}
     g = -du dv + \bar g
 \end{equation*}
 where $\bar{g}$ is positive-definite on the subbundle of $T \mathring E^-(p)$ orthogonal to $\partial_u$. Then
 \begin{equation*}
     \vol = 2 \df u \df v \df A
 \end{equation*}
 for $\df A$ the area element associated to $\bar{g}$, and hence
 \begin{equation*}
     L(X) \lrcorner \vol = 2 \df v(L(X)) \df u \df A = -g(L(X),\partial_u) \df u \df A.
 \end{equation*}
 Along each null generator of $\mathring E^-(p)$, $L(X)$ is a Jacobi field and hence $g(L(X),\partial_u)$ is constant. Moreover, (for each generator separately) the tangent vector $\partial_u$ can be extended to $p$, where $L(X) = X$, and since both $X$ and $\partial_u$ are future-directed causal,
 \begin{equation*}
     g(L(X),\partial_u) = g_p(X,\partial_u \vert_p) \leq 0,
 \end{equation*}
 where equality can only happen on at most one of the generators (the one tangent to $X$ if $X$ is null). This proves that the integral in \eqref{eq:dtauV} is positive and hence $\df \tau_V(X) > 0$.
\end{proof}

\subsection{Corollaries}

The first corollary implies the ones further below, and is also interesting on its own.

\begin{corollary} \label{cor:2}
    Let $(M,g)$ be a spacetime with regular cosmological volume function $\tau_V$. If $p \in M$ is a point such that $I^-(p)$ contains a future Cauchy surface $\CS$, then $\tau_V$ is $C^1$ with timelike gradient in a neighborhood of $p$.
\end{corollary}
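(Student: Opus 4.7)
The plan is to reduce the corollary to Theorem~\ref{thm:main} via a careful decomposition of the past of points near $p$, using the future-Cauchy property of $\CS$. Since $\CS \subset I^-(p)$ we have $p \in I^+(\CS)$, so I fix an open neighborhood $V_0 \subset I^+(\CS)$ of $p$. On $V_0$, Theorem~\ref{thm:main} delivers that
\begin{equation*}
    \tau_V^\CS(q) := \Vol\bigl(I^-(q) \cap I^+(\CS)\bigr)
\end{equation*}
is $C^1$ and temporal. Everything that follows consists of relating $\tau_V$ to $\tau_V^\CS$ in a neighborhood of $p$.

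The first step is to establish, for $q \in V_0$, the decomposition
\begin{equation*}
    \tau_V(q) = \tau_V^\CS(q) + f(q), \qquad f(q) := \Vol\bigl(I^-(q) \cap I^-(\CS)\bigr).
\end{equation*}
The reason is that any $r \in I^-(q)$ lies, up to the null set $\CS \cap I^-(q)$, in $I^+(\CS) \cup I^-(\CS)$: extending a past-directed timelike curve from $q$ through $r$ into a past-inextendible causal curve, the future-Cauchy property gives a unique $\CS$-crossing at some point $s$, and according to whether $s$ lies strictly between $q$ and $r$ or strictly past $r$ one concludes $r \in I^-(\CS)$ or $r \in I^+(\CS)$ respectively. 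Applying the pushup property to $\CS \subset I^-(p)$ shows $\Vol(I^-(\CS)) \leq \tau_V(p) < \infty$.

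The crucial remaining step, and the main obstacle, is to show that $f \equiv \Vol(I^-(\CS))$ on some open neighborhood $V \subset V_0$ of $p$; because $\tau_V^\CS$ is $C^1$ temporal, $\tau_V = \tau_V^\CS + \text{const}$ would then inherit both properties on $V$, yielding the corollary. By pushup, the local constancy of $f$ at $p$ is equivalent to $\CS \subset I^-(q)$ for all $q \in V$, and since the relation $\{(s,q) \in M \times M : s \ll q\}$ is open and contains $\CS \times \{p\}$, the tube lemma furnishes such a $V$ as soon as I can show that $\CS$ is compact. This compactness is the most delicate part of the argument: my plan is to derive it by combining that $\CS \subset J^-(p)$ in a globally hyperbolic spacetime, that $\CS$ is closed and edgeless, and that regularity of $\tau_V$ prevents sequences $s_n \in \CS$ from escaping $M$ without having $\tau_V(s_n) \to 0$, thereby forcing a positive lower bound $\tau_V|_\CS \geq \varepsilon$ and placing $\CS$ inside the compact set $J^-(p) \cap \{\tau_V \geq \varepsilon\}$.
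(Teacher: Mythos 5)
Your overall strategy is exactly the paper's: decompose $\tau_V(q)=\Vol(I^-(q)\cap I^+(\CS))+\Vol(I^-(q)\cap I^-(\CS))$, observe that for $q$ near $p$ one has $\CS\subset I^-(q)$ (compactness of $\CS$ plus openness of the chronology relation, i.e.\ the tube lemma), so that the second summand is the constant $\Vol(I^-(\CS))$ and the first is $C^1$ and temporal by Theorem~\ref{thm:main}. The decomposition itself is fine: the exceptional set is really $\bigl(J^+(\CS)\setminus I^+(\CS)\bigr)\cup\CS$ rather than $\CS\cap I^-(q)$, but this is contained in the Lipschitz hypersurface $\partial I^+(\CS)$ and hence has measure zero, so nothing is lost.

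The genuine problem is your justification of the compactness of $\CS$, which is the one nontrivial ingredient the paper takes for granted. Regularity of $\tau_V$ controls its behaviour along past-inextendible \emph{causal curves}; it says nothing about arbitrary divergent sequences in the achronal set $\CS$, so it does not ``prevent sequences $s_n\in\CS$ from escaping $M$'' nor does it force a positive lower bound on $\tau_V|_\CS$. (In $\{t>0\}\subset\R^{1,1}$ the sequence $(1,n)$ escapes every compact set while $\tau_V\equiv 1$ along it, so the dichotomy you invoke is false; and the bound $\inf_\CS\tau_V>0$ is normally a \emph{consequence} of compactness of $\CS$, not a route to it.) Moreover, the compactness of $J^-(p)\cap\{\tau_V\geq\varepsilon\}$ that you would then need is itself a statement of the same kind as the one you are trying to prove. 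The correct route is the standard domain-of-dependence argument: since $\CS$ is future Cauchy, $H^+(\CS)=\emptyset$, hence $J^+(\CS)\subset D^+(\CS)$ and in particular $p\in I^+(\CS)\subset\operatorname{int}D^+(\CS)$; as $(M,g)$ is globally hyperbolic (regularity of $\tau_V$), the set $J^-(p)\cap J^+(\CS)$ is then compact (Hawking--Ellis type result for closed acausal edgeless sets, or apply global hyperbolicity of $\operatorname{int}D(\CS)$ with Cauchy surface $\CS$), and $\CS$ is a closed subset of it. With that repair, the rest of your argument goes through and coincides with the paper's proof.
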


\begin{proof}
    Since $I^-(p) \cap \CS = \CS$ is compact, and chronological pasts are open, it is easy to infer the existence of a neighborhood $U$ of $p$ such that $\CS \subset I^-(q)$ for every $q \in U$. Then
    \begin{equation*}
        I^-(q) = \left(I^-(q) \cap I^+(\CS)\right) \cup \CS \cup I^-(\CS),
    \end{equation*}
    which can be seen as follows: For each $x \in I^-(q)$, let $\gamma$ be the timelike curve $\gamma$ joining $x$ and $q$. Either $\gamma$ intersects $\CS$, in which case $x \in \CS \cup I^-(\CS)$, or it doesn't, in which case $x \in I^+(\CS)$, because $\gamma$ can be extended to the past until it intersects $\CS$. Then
    \begin{equation*}
        \tau_V(q) = \Vol\left(I^-(q) \cap I^+(\CS)\right) + \Vol\left(I^-(\CS)\right),
    \end{equation*}
    where the first summand is $C^1$ by Theorem~\ref{thm:main}, and the second summand is constant.
\end{proof}

In order to prove Corollary \ref{cor:main}, we show that every point has an entire Cauchy surface in its past. This is a consequence of the characterization of the No Past Observer Horizons (NPOH) condition
\begin{equation*}
     I^+(\gamma) = M \ \text{ for every past-inextendible causal curve } \gamma.
\end{equation*}
given in \cite{GaZe}, which we rephrase here for the current purpose.

\begin{lemma}[Prop.~2.7 and Lem. 2.9 in \cite{GaZe}] \label{lem:NPOH}
    Let $(M,g)$ be a causal spacetime satisfying the  NPOH. Then, $(M,g)$ is globally hyperbolic with compact Cauchy surfaces. Moreover, for any Cauchy temporal function $\tau_C \colon M \to \R$, and any $t \in \R$, there is $t' < t$ such that:
    \begin{itemize}
        \item For any $p$ with $\tau_C(p) = t$, $\tau_C^{-1}(t') \subset I^-(p)$.
        \item For any $p'$ with $\tau_C(p') = t'$, $\tau_C^{-1}(t) \subset I^+(p')$.
    \end{itemize}
\end{lemma}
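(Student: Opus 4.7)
The plan is to deduce the lemma from Propositions~2.7 and Lemma~2.9 of \cite{GaZe}, whose proofs I would sketch as follows. For the first assertion (global hyperbolicity with compact Cauchy surfaces), the central observation would be that NPOH yields a ``common past'' property: for any $p, q \in M$ and any past-inextendible causal curve $\gamma$, the condition $I^+(\gamma) = M$ places both $p$ and $q$ in the chronological future of some sufficiently late (past) point on $\gamma$. Combined with causality, this rules out non-compact causal diamonds via the limit curve theorem, yielding global hyperbolicity. Compactness of Cauchy surfaces would follow from a further limit-curve argument carried out in \cite{GaZe}.

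For the second assertion (uniform existence of $t'$), the main step and primary obstacle is the pointwise claim: for each $p \in M$ there exists $t'_p < \tau_C(p)$ with $\tau_C^{-1}(t'_p) \subset I^-(p)$, i.e., $I^-(p)$ contains an entire Cauchy surface. I would prove this by applying NPOH to the family of past-inextendible causal curves based on the compact level sets, together with a limit-curve argument ruling out sequences of points outside $I^-(p)$ persisting at arbitrarily low levels.

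Given this pointwise claim, the uniformization over $p \in \tau_C^{-1}(t)$ reduces to a standard compactness argument. For each $t' < t$, define
\begin{equation*}
    U_{t'} := \{p \in \tau_C^{-1}(t) : \tau_C^{-1}(t') \subset I^-(p)\}.
\end{equation*}
Each $U_{t'}$ is open in $\tau_C^{-1}(t)$: given $p_0 \in U_{t'}$, the openness of the chronological relation combined with the tube lemma applied to the compact set $\tau_C^{-1}(t')$ yields an open neighborhood of $p_0$ contained in $U_{t'}$. The family is nested, $U_{t''} \subset U_{t'}$ for $t' < t''$: if $\tau_C^{-1}(t'') \subset I^-(p)$, then for any $q' \in \tau_C^{-1}(t')$ a future-directed timelike curve from $q'$ meets $\tau_C^{-1}(t'')$ at some $q''$, and push-up gives $q' \ll q'' \ll p$. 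The pointwise claim yields $\bigcup_{t' < t} U_{t'} = \tau_C^{-1}(t)$, so compactness produces a finite subcover which by nesting reduces to a single $t' < t$ with $U_{t'} = \tau_C^{-1}(t)$. This witnesses the first bullet, and the second is equivalent to it by the symmetric roles of $p$ and $p'$ in the relation $\ll$.
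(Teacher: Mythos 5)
The paper does not prove this lemma at all: it is imported, with only a rephrasing, from Prop.~2.7 and Lem.~2.9 of \cite{GaZe}, so there is no in-paper argument to compare against. Judged on its own terms, your proposal has one part that is complete and correct and two parts that are only gestures. The uniformization step is solid: the sets $U_{t'}$ are relatively open by openness of the chronology relation plus the tube lemma over the compact level set $\tau_C^{-1}(t')$, they are nested by the transitivity argument you give, they cover the compact set $\tau_C^{-1}(t)$ once the pointwise claim is granted (note $t'_p < t$ automatically, since $\tau_C$ increases along causal curves), and a finite subcover collapses to a single $t'$ by nesting. Your remark that the two bullets are literally the same statement (namely $q \ll p$ for all $q \in \tau_C^{-1}(t')$ and $p \in \tau_C^{-1}(t)$) is also correct. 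This is a genuinely useful supplement if \cite{GaZe} only states a pointwise version.

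The gaps are in the two substantive inputs. For global hyperbolicity with compact Cauchy surfaces, ``rules out non-compact causal diamonds via the limit curve theorem'' is not an argument; this implication from causality plus NPOH is precisely the content of Prop.~2.7 of \cite{GaZe} and is not routine. More seriously, your sketch of the pointwise claim omits the one idea that makes it work. To contradict NPOH you must produce a single past-inextendible causal curve $\gamma$ with $\gamma \cap I^-(p) = \emptyset$, and ``a limit-curve argument ruling out sequences of points outside $I^-(p)$'' does not explain why the limit curve avoids $I^-(p)$ or why the points $q_n \notin I^-(p)$ at levels $t'_n \to \inf\tau_C$ can be joined by causal curves staying outside $I^-(p)$. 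The missing observation is that $A := M \setminus I^-(p)$ is a \emph{closed future set}, so $J^+(q_n) \subset A$; one then connects each $q_n$ forward to a point $r_n$ of the compact set $A \cap \tau_C^{-1}(t_0)$ by a causal curve lying entirely in $A$, passes to a convergent subsequence $r_n \to r$, and applies the limit curve theorem to obtain a past-inextendible causal curve through $r$ contained in the closed set $A$, contradicting NPOH. If, like the paper, you intend simply to cite \cite{GaZe} for these two facts, the proposal is acceptable; as a self-contained proof it is not.
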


Now let $(M,g)$ be a causal spacetime with finite $\tau_V$ and satisfying the NPOH (as in Corollary~\ref{cor:main}), and let $p \in M$ be any point. Then $(M,g)$ is globally hyperbolic, and we may choose a Cauchy temporal function $\tau_C$ by \cite{BeSa05}. Applying Lemma~\ref{lem:NPOH} with $t = \tau_C(p)$, we conclude that the Cauchy surface $\tau^{-1}(p')$ is contained in $I^-(p)$. Moreover, by Proposition~\ref{prop:1}, global hyperbolicity and finiteness of $\tau_V$ together imply regularity of $\tau_V$. Thus Corollary~\ref{cor:2} is applicable, and since $p$ was arbitrary, this proves Corollary~\ref{cor:main}.

Time-reversing our assumption to the No Future Observer Horizons (NFOH) condition,
\begin{equation*}
     I^-(\gamma) = M \ \text{ for every future-inextendible causal curve } \gamma,
\end{equation*}
but keeping the definition of $\tau_V(\cdot) = \vol(I^-(\cdot))$ intact, we obtain a result similar to Corollary~\ref{cor:main}, but only from a certain time on.

\begin{corollary}
    Let $(M,g)$ be a causal spacetime satisfying the NFOH, and suppose that $\tau_V$ is finite. Then, there is a compact Cauchy surface $\CS \subset M$ such that $\tau_V$ is $C^1$ with timelike gradient at every point in $I^+(\CS)$. In particular, this includes every point $p$ with
    \begin{equation*}
        \tau_V(p) > \max\{\tau_V(q) \mid q \in \CS \}.
    \end{equation*}
\end{corollary}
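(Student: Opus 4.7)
The plan is to time-dualize the argument used for Corollary~\ref{cor:main}: I will construct a compact Cauchy surface $\CS$ ``late enough'' that every point in $I^+(\CS)$ has another compact Cauchy surface in its chronological past, and then invoke Corollary~\ref{cor:2}. Since NFOH is the time-reverse of NPOH, applying Lemma~\ref{lem:NPOH} to the time-reversed spacetime gives that $(M,g)$ is globally hyperbolic with compact Cauchy surfaces, and that for any Cauchy temporal function $\tau_C$ (available by \cite{BeSa05}) and any $t_0 \in \R$, there exists $t_0' > t_0$ such that $\tau_C^{-1}(t_0) \subset I^-(p')$ for every $p' \in \tau_C^{-1}(t_0')$.

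Fix such a pair $t_0 < t_0'$ and set $\CS := \tau_C^{-1}(t_0')$. For any $p \in I^+(\CS)$, I would pick $q \in \CS$ with $q \ll p$; monotonicity of the chronological past then yields $\tau_C^{-1}(t_0) \subset I^-(q) \subset I^-(p)$, so $I^-(p)$ contains the compact Cauchy surface $\tau_C^{-1}(t_0)$. Noting that $\tau_V$ is regular (via Proposition~\ref{prop:1}, using global hyperbolicity and finiteness of $\tau_V$), Corollary~\ref{cor:2} then applies at $p$ and delivers the first assertion.

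For the ``in particular'' claim, observe that $\CS$ is compact, so $m := \max\{\tau_V(q) \mid q \in \CS\}$ is attained. Given any $p$ with $\tau_V(p) > m$, we have $p \notin \CS$, and using the disjoint decomposition $M = I^-(\CS) \sqcup \CS \sqcup I^+(\CS)$ (valid since $\CS$ is Cauchy) together with the fact that $\tau_V$ is continuous and strictly increasing along future-directed causal curves, the alternative $p \in I^-(\CS)$ would force $\tau_V(p) < \tau_V(q) \leq m$ for some $q \in \CS$, a contradiction. Hence $p \in I^+(\CS)$ and the first part applies.

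I do not expect any serious obstacle. The only mildly subtle point is writing down the correct time-dual of Lemma~\ref{lem:NPOH}, but since that result is purely causality-theoretic and NFOH is literally the time-reverse of NPOH, this amounts to careful bookkeeping rather than a substantive difficulty.
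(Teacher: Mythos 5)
Your proof is correct and follows essentially the same route as the paper: time-reverse Lemma~\ref{lem:NPOH} to get global hyperbolicity with compact Cauchy surfaces and a pair of levels $t_0<t_0'$ of a Cauchy temporal function with $\tau_C^{-1}(t_0)\subset I^-(p')$ for all $p'\in\tau_C^{-1}(t_0')$, then apply Proposition~\ref{prop:1} and Corollary~\ref{cor:2}. If anything, your bookkeeping is slightly cleaner than the paper's: by naming the \emph{later} level set $\CS=\tau_C^{-1}(t_0')$ and passing through an intermediate $q\in\CS$ with $q\ll p$, you genuinely cover all of $I^+(\CS)$ as the statement demands, and you also spell out the ``in particular'' claim, which the paper leaves implicit.
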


Note that here we mean the cosmological volume function $\tau_V$ of the spacetime $(M,g)$, unlike in Theorem~\ref{thm:main}, where we meant the $\tau_V$ of the spacetime $(I^+(\CS),g)$ .

\begin{proof}
    By the time-reverse of Lemma \ref{lem:NPOH}, $(M,g)$ is globally hyperbolic with compact Cauchy surfaces, hence has regular $\tau_V$ by Proposition~\ref{prop:1} and admits a Cauchy temporal function $\tau_C$ by \cite{BeSa05}. Let $t \in \R$ be any value, and $\CS := \tau_C^{-1}(t)$. Again by the time-reverse of Lemma~\ref{lem:NPOH}, there is a $t' > t$ such that $\CS \subset I^-(p)$ for all $p$ with $\tau_C(p) = t'$, and hence also for all $p$ with $\tau_C(p) \geq t'$. Corollary~\ref{cor:2} then implies that $\tau_V$ is $C^1$ at $p$, with past-directed timelike gradient.
\end{proof}

\subsection{Examples regarding differentiability of $\tau_V$}

We start with an example to motivate the relevance of Corollary~\ref{cor:main} in the cosmological context.

\begin{example}[Corollary \ref{cor:main} applicable] \label{ex:flrw}
 Let $(N,h)$ be a compact Riemannian $n$-manifold, and let $a \colon (0,\infty) \to (0,\infty)$ be a smooth function. Consider the Lorentzian warped product
 \begin{equation*}
     M := (0,\infty) \times N, \qquad g := -dt^2 + a(t)^2 h,
 \end{equation*}
 which is globally hyperbolic. In particular, this is an FLRW spacetime if $(N,h)$ is the round sphere $\mathbb{S}^n$ (or a quotient of an FLRW if $(N,h)$ is a compact quotient of $\R^n$ or $\mathbb{H}^n$). Suppose that $(M,g)$ has no particle horizons, which by definition means that
 \begin{equation*}
     \int_0^1 \frac{1}{a(t)} \dd t = \infty.
 \end{equation*}
 Then $(M,g)$ satisfies the NPOH condition. A proof of this implication can be found in \cite[Prop.~3.6]{Sbi25}, stated there for $(N,h) = \mathbb{S}^n$, but valid for any compact manifold. Moreover, since $I^-(p) \subsetneq (0,t(p)) \times N$,
 \begin{equation*}
     \tau_V(p) <  \operatorname{vol}_h(N) \int_0^{t(p)} a^n(t) \dd t,
 \end{equation*}
 so if the right hand side is finite (for example, if $a(t) \to 0$ as $t \to 0$), then $\tau_V$ is finite and, by Proposition~\ref{prop:1}, regular.
\end{example}

The next example is very similar to that of \cite[Fig.~1]{CGM16}, and shows that if we allow $\CS$ to have lightlike portions, then Theorem~\ref{thm:main} no longer holds.

\begin{example}[$\tau_V$ not $C^1$] \label{ex:vol}
 Let $f$ be the function
 \begin{equation*}
  f(x) := \begin{cases}
        0 &\text{if } x < 0,\\
        x &\text{if } 0 \leq x \leq 1,\\
        1 &\text{if } 1 \leq x,
       \end{cases}
 \end{equation*}
 and let $M := \{ (t,x) \in \R^{1,1} \mid t > f(x) \}$ be equipped with the Minkowski metric $-\df t^2 + \df x^2$. Then $\tau_V$ is regular, but it is not differentiable along $\{t=x\}$. To illustrate this, we consider $p := (2,2)$ and compute the directional derivatives from the left and from the right along the direction $v := (-1,1)$ (see also Figure~\ref{fig:cosmo4}). For $\varepsilon > 0$, we have
 \begin{align*}
  \tau_V (p - \varepsilon v) &= (1-\varepsilon)^2 \\
  \tau_V (p + \varepsilon v) &= (1+\varepsilon)^2 + 2\varepsilon
 \end{align*}
 and hence
 \begin{equation*}
  D^-_{v} \tau_V = 2 \neq 4 = D^+_v \tau_V.
 \end{equation*}
\end{example}

\begin{figure}
 \centering
 \includegraphics{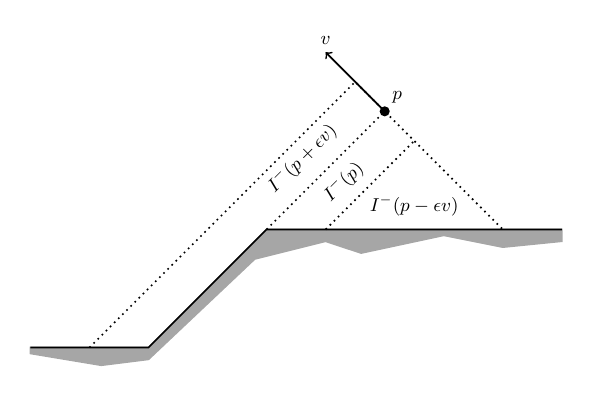}
 \caption{A sketch of the spacetime in Example~\ref{ex:vol}, with the past sets of $p \pm \varepsilon v$ indicated (whose volume equals the cosmological volume function at $p \pm \varepsilon v$).}
 \label{fig:cosmo4}
\end{figure}

Next we present a spacetime where $\tau_V$ fails to be locally Lipschitz (cf.\ Theorem~\ref{thm:timereg}(vi)). Unlike the previous example, this one has a spacelike causal boundary (consisting of more than one point, so Corollary~\ref{cor:main} is not applicable).

\begin{example}[$\tau_V$ not $C^{0,1}$] \label{ex:notC1}
    Let $M := (0,\infty) \times \R$ and $g := \Omega (-\df t^2 + \df x^2)$, where
    \begin{equation*}
        \Omega(t,x) = \begin{cases}
            (t^2+x^2)^{-3/2} & \text{if } x < -t, \\
            1 & \text{if } x>0, \\
            \text{smooth} & \text{else.}
        \end{cases}
    \end{equation*}
    We show that $\tau_V$ is not differentiable at $(1,1)$, in fact not even locally Lipschitz. For any $\varepsilon > 0$, $I^-((1+\varepsilon,1))$ intersects the wedge $A := \{ x < -t \}$, which gives a finite but non-Lipschitz contribution to the volume due to the choice of conformal factor $\Omega$. In particular, for $r = \varepsilon / \sqrt{2}$, we have
    \begin{equation*}
        \left\{ \sqrt{t^2 + x^2} > r \right\} \cap A \subset I^-\left((1+\varepsilon,1)\right) \setminus I^-((1,1)),
    \end{equation*}
    as depicted in Figure~\ref{fig:proof}, so
    \begin{align*}
         \tau_V((1+\varepsilon,1)) - \tau_V((1,1)) &> \Vol\left(\left\{ \sqrt{t^2 + x^2}  r \right\} \cap A\right) \\ &= \frac{\pi}{4} \int_0^{r} \frac{1}{\sqrt{\rho}} \,\df \rho = \frac{\pi}{4} \sqrt{\frac{\varepsilon}{\sqrt{2}}}.
    \end{align*}
    But then
    \begin{equation*}
        \frac{\tau_V((1+\varepsilon,1)) - \tau_V((1,1))}{\varepsilon} \to \infty \quad \text{as} \quad \varepsilon \to 0,
    \end{equation*}
    so $\tau_V$ fails the Lipschitz property in any neighborhood of $(1,1)$. Note that the metric $g$ cannot be extended beyond $t=0$ (at least not in the given coordinates), given that $\Omega \to \infty$ when approaching $(0,0)$ from within $A$. Hence $(M,g)$ cannot be cast in the form $I^+(\CS)$, for $\CS$ a Cauchy surface in a larger spacetime, and Theorem~\ref{thm:main} is not applicable.
\end{example}

\begin{figure}
 \centering
 \includegraphics[scale=1.2]{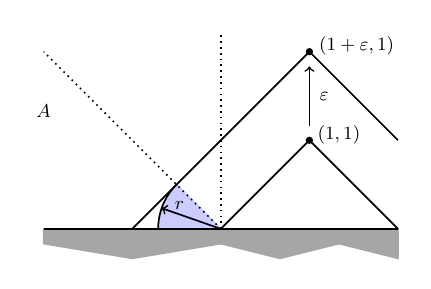}
 \caption{A sketch of the argument that shows lack of differentiability in Example~\ref{ex:notC1}.}
 \label{fig:proof}
\end{figure}

Finally, we give an example of a spacetime where Theorem~\ref{thm:main} is applicable, and $\tau_V$ is $C^1$ but not $C^2$.

\begin{example}[$\tau_V$ not $C^2$] \label{ex:notC2}
    Let $M := (0,\infty) \times S^1$ and $g := -\df t^2 + \df \theta^2$ (we can view $M$ as $I^+(\{t=0\})$ in the larger spacetime $(\R \times S^1,g)$). See Figure~\ref{fig:notC2} for a sketch of the following argument: The past of a point $(t,\theta)$ with $0 < t \leq \pi$ is an isosceles triangle of height $t$ and width $2t$, so
    \begin{equation*}
        \tau_V(t,\theta) = t^2, \quad \df \tau_V = 2 t \df t, \quad \frac{\partial^2 \tau_V}{\partial t^2} = 2.
    \end{equation*}
    If, on the other hand, $t \geq  \pi$, then $I^-((t,\theta))$ is the union of an isosceles triangle of height $\pi$ and width $2\pi$ and a rectangle of height $t-\pi$ and width $2\pi$, so
    \begin{equation*}
        \tau_V(t,x) = 2\pi (t-\pi) + \pi^2, \quad \df \tau_V = 2 \pi \df t, \quad \frac{\partial^2 \tau_V}{\partial t^2} = 0.
    \end{equation*}
    We observe that the two expressions for $\df \tau_V$ match continuously at $t=\pi$, but the ones for the second derivative do not.
\end{example}

\begin{figure}
 \centering
 \includegraphics[scale=1.25]{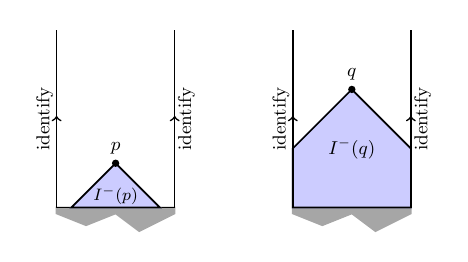} 
 \caption{A sketch of Example~\ref{ex:notC2}. On the left, a point with $t < \pi$. On the right, a point with $t > \pi$.} 
 \label{fig:notC2}
\end{figure}

\section{Comparison of cosmological time and volume functions}

In this section, we provide further results and examples to compare the properties of $\tau$ and $\tau_V$.

\subsection{Restriction to subsets}

The cosmological time function $\tau$ behaves well with respect to restriction to certain subsets of $M$. In particular, if $\CS = \{\tau= c\}$ for some $c > 0$, and $\tilde\tau$ is the cosmological time function of $I^+(\CS)$ (viewed as a spacetime of its own), then it is easy to prove that $\tilde\tau$ is also regular, and
\begin{equation*}
    \tilde\tau(p) = \tau(p) - c \quad \text{for all } p \in I^+(\CS).
\end{equation*}
An analogous result does not exist for the cosmological volume function, since the portion of the volume of $I^-(p) \cap J^-(\CS)$ varies with $p$ even if $p \in I^+(\CS)$ (except in the special case when $\CS \subset I^-(p)$, which is treated in Corollary~\ref{cor:2}).

\subsection{Regularity}

Recall that regularity of the cosmological time or volume function means that it is finite and tends to zero along all past-inextendible causal curves. It is a crucial property needed to ensure that these are even time functions in the usual sense (i.e.\ continuous and strictly increasing along all future-directed causal curves). Using volume comparison techniques recently developed by Cavalletti and Mondino \cite{CaMo24}, we identify conditions such that regularity of $\tau$ implies regularity of $\tau_V$.

\begin{proposition}
 Let $(M,g)$ have a regular cosmological time function $\tau$, compact Cauchy surfaces, and $\operatorname{Ric} \geq -\kappa g$ in timelike directions, for some constant $\kappa \in \R$. Then $(M,g)$ has a regular cosmological volume function $\tau_V$.
\end{proposition}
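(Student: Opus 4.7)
The plan is to reduce the statement to showing that $\tau_V(p)<\infty$ for every $p\in M$. Once pointwise finiteness is established, Theorem~\ref{thm:timereg}(i) gives that $(M,g)$ is globally hyperbolic, and Proposition~\ref{prop:1} then upgrades finiteness of $\tau_V$ to full regularity. So from now on the target is the pointwise bound.

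The first step is to observe that regularity of $\tau$ already controls the Lorentzian diameter of $I^-(p)$. If $q\in I^-(p)$, concatenating a length-maximizing causal curve from $q$ to $p$ with a past causal curve emanating from $q$ of length approaching $\tau(q)$ produces past causal curves from $p$ of length approaching $d_g(q,p)+\tau(q)$. Since $\tau(p)$ bounds all such lengths, one gets $d_g(q,p)\leq \tau(p)-\tau(q)\leq \tau(p)$, and consequently
\[
    I^-(p)\subset B^-_{\tau(p)}(p):=\{q\in M:d_g(q,p)\leq \tau(p)\}.
\]
Thus it suffices to bound the volume of this past metric ball in terms of $\tau(p)$ and $\kappa$.

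The heart of the proof is then a Lorentzian Bishop--Gromov-type bound. Under $\operatorname{Ric}\geq -\kappa g$ in timelike directions, the volume comparison theory of Cavalletti and Mondino \cite{CaMo24} yields an upper bound for $\Vol(B^-_r(p))$ by the volume of the corresponding past ball in a reference spacetime of constant curvature depending on $\kappa$ and $n+1$. Since this reference volume is finite for every finite $r$, applying the comparison at $r=\tau(p)$ gives
\[
    \tau_V(p)\leq \Vol\bigl(B^-_{\tau(p)}(p)\bigr)\leq V_{\kappa,n}\bigl(\tau(p)\bigr)<\infty,
\]
which is what we wanted.

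The main obstacle I expect lies in matching hypotheses: the Cavalletti--Mondino framework is synthetic, and one should verify that a smooth $(M,g)$ with pointwise timelike Ricci bound, global hyperbolicity, and \emph{compact} Cauchy surfaces satisfies the relevant timelike curvature-dimension condition required for their volume comparison, the compactness assumption being what guarantees the $\sigma$-finiteness and properness the synthetic framework demands. A smoother but essentially equivalent route is to derive the bound directly via Raychaudhuri's equation along past-directed timelike geodesics emanating from $p$, comparing the Jacobi density $\theta(r,v)$ pointwise with its constant-curvature model; here the subtlety is the past cut locus of $p$, which however has zero measure and so does not affect the integrated inequality.
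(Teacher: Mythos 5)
There is a genuine gap at the central step. The reduction to pointwise finiteness of $\tau_V$ via Proposition~\ref{prop:1} is correct and matches the paper, and the observation that $d_g(q,p)\le \tau(p)-\tau(q)$ for $q\in I^-(p)$ is also correct; but note that this containment does not actually shrink anything, since every point of $I^-(p)$ automatically lies in your ball, so $B^-_{\tau(p)}(p)\cap I^-(p)=I^-(p)$ and everything rests on the comparison bound in the next step. That bound does not exist: there is no finite model volume $V_{\kappa,n}(r)$ for point-centred past balls. Already in Minkowski space $\R^{1,1}$ with $p$ the origin, the set $\{q\in I^-(p): d_g(q,p)\le r\}$ equals $\{u,v>0,\ uv\le r^2\}$ in null coordinates $u=-t-x$, $v=-t+x$, and this has infinite area for every $r>0$, because it contains an unbounded sliver hugging the past light cone. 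So the ``reference volume'' you invoke is $+\infty$ and the inequality is vacuous. The Raychaudhuri variant you sketch fails for the same reason: the Jacobian comparison along each individual past-directed timelike geodesic from $p$ is fine, but you must then integrate over the set of past-directed unit timelike directions at $p$, which is a non-compact hyperboloid of infinite measure, and the Ricci bound says nothing about how quickly the geodesics near the light cone leave $I^-(p)$ or hit their cut points. (The null cut locus having measure zero is not the issue.) Consequently, compactness of the Cauchy surfaces cannot be relegated to a ``$\sigma$-finiteness'' side condition; it is the essential geometric input that replaces the non-compact cone of directions by something compact.

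The paper's proof uses it in exactly that way. By Proposition~\ref{prop:Cauchy} some level set $\CS=\{\tau=b\}$ is a compact Cauchy surface, and one foliates $I^-(\CS)$ by the equidistant hypersurfaces $\tilde\CS_t=\{d(\cdot,\CS)=t\}$ with $0<t\le b$ (the range is bounded by $b$ because any maximizer from $q$ to a point $x\in\CS$ is a past causal curve from $x$, whence $d(q,x)\le\tau(x)=b$). The coarea formula reduces finiteness of $\Vol(I^-(\CS))$ to integrability of $t\mapsto\operatorname{vol}^n_g(\tilde\CS_t)$; the near region $0\le t\le a$ is handled by compactness of the slab between $\{\tau=b\}$ and $\{\tau=b/2\}$, and the far region by the Cavalletti--Mondino comparison \cite[Thm.~5.1]{CaMo24} for the distance function to a \emph{hypersurface}, which makes $\operatorname{vol}^n_g(\tilde\CS_t)/\mathfrak{s}_\kappa(t)$ non-increasing and hence integrable up to $t=b$. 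Passing from the point $p$ to a compact hypersurface is what makes the family of normal geodesics compact and the comparison argument close; if you insist on a point-based version, you must first place an entire compact Cauchy surface inside $I^-(p)$ and split off the compact diamond above it, which is essentially the paper's route.
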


\begin{proof}
 By Proposition~\ref{prop:Cauchy} below, there is some $b > 0$ such that $\CS := \{ p \mid \tau(p) = b\}$ and $\CS' := \{ p  \mid \tau(p) = b/2\}$ are Cauchy surfaces. Let $\tilde{\CS}_{t} := \{p  \mid d(p,\CS) = t \}$. The function $d(\cdot,\CS)$ is continuous (see \cite[Lem.~3.4]{GaVe14}, time-reversed), and hence reaches a minimum value $a > 0$ on the compact set $\CS'$. Thus $D := \{ p  \mid 0 \leq d(p,\CS) \leq a \}$ is contained between $\CS$ and $\CS'$, and is compact. Denote by $\Vol^n$ the $n$-dimensional measure of a hypersurface. We apply the coarea formula to obtain
 \begin{equation*} 
  \Vol(I^-(\CS)) = \int_{I^-(\CS)} \vol^{n+1} =  \int_0^a \operatorname{vol^n_g}(\tilde\CS_t) \, \df t + \int_a^b \operatorname{vol^n_g}(\tilde\CS_t) \, \df t.
 \end{equation*}
 The first term on the right hand side is finite, because it is simply the $n+1$-dimensional volume of the compact region $D$. To show that also the second term is finite, we use \cite[Thm.~5.1]{CaMo24}, which states that
 \begin{equation*}
     t \longmapsto \frac{\Vol^n(\tilde\CS_t)}{\mathfrak{s}_\kappa(t)}
 \end{equation*}
 is non-increasing, for
 \begin{equation*}
    \mathfrak{s}_{\kappa}(t):=
    \begin{cases}
    \frac{1}{\sqrt{\kappa}} \sin(\sqrt{\kappa} t)   &\text{if } \kappa>0,\\
    t &\text{if } \kappa=0,\\
    \frac{1}{\sqrt{-\kappa}} \sinh(\sqrt{-\kappa} t)   &\text{if } \kappa<0.\\
    \end{cases}
 \end{equation*}
 It follows that
 \begin{equation*}
     \int_a^b \operatorname{vol^n_g}(\tilde\CS_t) \, \df t \leq \frac{\Vol^n(\tilde \CS_a)}{\mathfrak{s}_{\kappa}(a)} \int_a^b \mathfrak{s}_\kappa(t) \, \df t
 \end{equation*}
 is finite. Hence $\Vol(I^-(\CS)) < \infty$. Now since $\CS$ is Cauchy, $M = I^-(\CS) \cup \CS \cup I^+(\CS)$. For $p \in I^-(\CS) \cup \CS$, we have $I^-(p) \subset I^-(\CS)$ and hence $\tau_V(p) < \infty$. For $p \in I^+(\CS)$, we have
 \begin{equation*}
     I^-(p) \subseteq I^-(\CS) \cup \CS \cup \left( I^-(p) \cap I^+(\CS) \right).
 \end{equation*}
 Here the set in brackets is compact (because $\CS$ is Cauchy), and hence has finite volume. It follows that $\tau_V(p) < \infty$. We conclude that $\tau_V$ is finite, and since $(M,g)$ is globally hyperbolic, this implies regularity by Proposition~\ref{prop:1}.
\end{proof}

\subsection{Cauchy level sets}

Theorem~\ref{thm:timereg}(iii) establishes that the level sets $\CS_t := \{\tau = t\}$ of the cosmological time function are future Cauchy (Definition~\ref{def:fCS}). The question of when they are Cauchy was first investigated in \cite{GaGa25}.

\begin{theorem}[{\cite[Thm.~5]{GaGa25}}] \label{thm:GaGa25}
 Let $(M,g)$ be a future timelike geodesically complete spacetime satisfying at least one of the following:
 \begin{enumerate}
  \item $(M,g)$ contains a compact Cauchy surface.
  \item The future causal boundary of $(M,g)$ is spacelike.
 \end{enumerate}
 If the cosmological time function $\tau \colon M \to (0,\infty)$ is regular, then its level sets are Cauchy surfaces.
\end{theorem}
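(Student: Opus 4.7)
The plan is to reformulate the claim as an unboundedness statement for $\tau$ along future-inextendible causal curves, and then handle it via future timelike geodesic completeness. The level sets $\CS_t := \tau^{-1}(t)$ are already closed, acausal, edgeless, and future Cauchy by Theorem~\ref{thm:timereg}(iii), so they fail to be Cauchy only if some future-inextendible causal curve $\gamma$ remains trapped in $\{\tau < t\}$. Since $\tau$ is continuous and strictly increasing along future-directed causal curves, the problem reduces to showing $\sup_s \tau(\gamma(s)) = +\infty$ for every future-inextendible causal curve $\gamma$.

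For future-inextendible timelike geodesics the bound is immediate: parametrize $\eta \colon [0,\infty) \to M$ by arc length (valid by hypothesis) and concatenate an $\varepsilon$-maximizer of the past supremum at $\eta(0)$ with $\eta\vert_{[0,s]}$ to obtain
\begin{equation*}
    \tau(\eta(s)) \geq \tau(\eta(0)) + s - \varepsilon,
\end{equation*}
which tends to $+\infty$. The main obstacle is extending this conclusion to arbitrary future-inextendible causal curves, and this is where the dichotomy between (i) and (ii) enters.

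For case (i), I would argue by contradiction: suppose $\gamma$ is future-inextendible causal with $\tau \circ \gamma \leq T$. Pick a Cauchy temporal function $\tau_C$ via \cite{BeSa05}, whose slices $\Sigma_s$ are compact; by non-total imprisonment and the fact that $\tau_C^{-1}([a,b])$ is compact, $\tau_C \circ \gamma \to +\infty$, so there exist $p_n \in \gamma$ with $\tau_C(p_n) \to \infty$ and $\tau(p_n) \leq T$. The past-directed unit maximizing timelike geodesics $\gamma_{p_n}$ of Theorem~\ref{thm:timereg}(iv) have length bounded by $T$, and Theorem~\ref{thm:timereg}(v) controls their tangent vectors on compact sets. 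A limit-curve argument, carefully anchored at the past end so as to overcome the fact that the $p_n$ escape to infinity, then produces a future-complete timelike geodesic along which $\tau$ nonetheless stays bounded by $T$, contradicting the estimate above. For case (ii), a future-inextendible causal curve $\gamma$ with $\tau \circ \gamma \leq T$ determines a terminal indecomposable past $P(\gamma) \subset \{\tau \leq T\}$, while the geodesic step yields a future-complete timelike $\eta$ whose TIP $P(\eta)$ meets $\{\tau > T\}$. Exhibiting some such $\eta$ with $\gamma \subset I^-(\eta)$, so that $P(\gamma) \subsetneq P(\eta)$, gives a strict chronological relation between two distinct ideal future points, contradicting the spacelike boundary hypothesis. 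In both cases the technical heart is the passage from timelike geodesics, where completeness can be exploited pointwise, to arbitrary causal curves.
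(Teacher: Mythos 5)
The paper does not actually prove this statement: Theorem~\ref{thm:GaGa25} is quoted from \cite[Thm.~5]{GaGa25} and used as an external input, so there is no internal proof here to compare your outline against. On its own terms, your reduction is correct: the level sets are already future Cauchy by Theorem~\ref{thm:timereg}(iii), and since $\tau$ is continuous, strictly increasing along future-directed causal curves, and tends to $0$ along past-inextendible ones, all level sets are Cauchy if and only if $\tau$ is unbounded along every future-inextendible causal curve. The estimate $\tau(\eta(s))\geq \tau(\eta(0))+s$ for unit-speed future-complete timelike geodesics is also right (the $\varepsilon$ is not even needed, by taking the supremum over past extensions). However, both of the steps you yourself flag as ``the technical heart'' contain genuine gaps.

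In case (i), the contradiction you announce --- a future-complete timelike geodesic along which $\tau$ stays bounded by $T$ --- does not come out of the construction you describe. The past maximizers $\gamma_{p_n}$ have length $\tau(p_n)\leq T$; after anchoring them at their intersection points $q_n$ with a fixed compact Cauchy surface $\Sigma$ lying in the past of the $p_n$ (a correct way to handle the escape of the $p_n$, using that the truncated curves still realize $\tau$ at $q_n$, so Theorem~\ref{thm:timereg}(v) applies on $\Sigma$), the limit object is only a timelike geodesic \emph{segment} of parameter length $t_\infty=\lim\big(\tau(p_n)-\tau(q_n)\big)\leq T$. Its complete extension beyond $t_\infty$ is not approximated by anything, so you obtain no bound on $\tau$ there; indeed $\tau(\eta(s))=\tau(q)+s$ on $[0,t_\infty]$ is perfectly consistent with your geodesic estimate, and no contradiction arises in the form you state. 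The contradiction that \emph{is} available is different: by future timelike completeness and continuous dependence of geodesics on initial data, the reversed segments $s\mapsto \exp_{q_n}\!\big(-s\,\dot\gamma_{q_n}(0)\big)$ converge uniformly on $[0,t_\infty]$, so their endpoints $p_n$ converge to a point of $M$, contradicting $\tau_C(p_n)\to\infty$. In case (ii) the gap is more serious: the sentence ``exhibiting some such $\eta$ with $\gamma\subset I^-(\eta)$'' is exactly the crux and is unsupported. There is no a priori reason that the given future-inextendible causal curve $\gamma$ should lie in the chronological past of a single future-complete timelike geodesic; if that were easy to arrange, the spacelike-boundary hypothesis would be doing no work at all. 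As written, case (ii) restates the difficulty rather than resolving it.
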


It remained open if assumptions (i)-(ii) can be removed. Example~\ref{counterexCauchy} below gives a negative answer. Example 1 in \cite{GaGa25} illustrates that future timelike completeness can also not be removed. The next proposition, however, shows that if we are only interested in small times, then spatial compactness alone suffices. This is also true for the cosmological volume function. The proof is adapted from that of \cite[Thm.~4.2]{GaZe}.

\begin{proposition} \label{prop:Cauchy}
 If $(M,g)$ is globally hyperbolic with compact Cauchy surfaces and admits a regular cosmological time or volume function $\tau$. Then there exists $a > 0$ such that $\{\tau = t\}$ is a Cauchy surface for every $0 < t \leq a$.
\end{proposition}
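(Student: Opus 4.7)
The plan is to exploit the regularity of $\tau$ (either time or volume) together with the compactness of a Cauchy surface $\Sigma$ provided by global hyperbolicity, to bound $\tau$ from below on $\Sigma$ and then show that low-value level sets of $\tau$ lie entirely to the past of $\Sigma$ and are traversed by every inextendible causal curve.

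First, I would fix a smooth compact Cauchy surface $\Sigma \subset M$, which exists by global hyperbolicity. Since $\tau$ is continuous and strictly positive on $M$ and $\Sigma$ is compact, the quantity
\begin{equation*}
    a := \min_{p \in \Sigma} \tau(p)
\end{equation*}
is well-defined and strictly positive. I would then fix an arbitrary $t$ with $0 < t \leq a$ and set $\CS_t := \{\tau = t\}$. By the theorem following Proposition~\ref{prop:1} (and its analogue for regular cosmological time), $\CS_t$ is already known to be future Cauchy, hence in particular closed, acausal, and edgeless. So it only remains to rule out the existence of a past Cauchy horizon, i.e., to show that every future-directed inextendible causal curve meets $\CS_t$ (exactly once).

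The uniqueness of the intersection is immediate from strict monotonicity of $\tau$ along causal curves. For existence, let $\gamma$ be a future-directed inextendible causal curve. Since $\Sigma$ is Cauchy, $\gamma$ meets $\Sigma$ at some parameter $s_0$, where $\tau(\gamma(s_0)) \geq a \geq t$. On the other hand, the restriction of $\gamma$ to parameters $s \leq s_0$ is a past-inextendible causal curve, so by regularity of $\tau$ we have $\tau(\gamma(s)) \to 0$ as $s \to -\infty$ (or as $s$ approaches the past endpoint of the parametrization). Continuity of $\tau \circ \gamma$ together with the intermediate value theorem then yields some $s_1 \leq s_0$ with $\tau(\gamma(s_1)) = t$, so $\gamma$ meets $\CS_t$ as desired.

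The proof goes through identically for $\tau$ and $\tau_V$ because only three properties are used: continuity on $M$, strict monotonicity along future-directed causal curves, and the regularity condition $\tau \to 0$ along past-inextendible causal curves. I do not expect any serious obstacle; the only subtlety is organizational, namely verifying that the already-established future Cauchy property of $\CS_t$ correctly supplies closedness, acausality, and edgelessness, so that the arguments above (inherited from \cite[Thm.~4.2]{GaZe}) only need to close the remaining gap concerning past inextendibility.
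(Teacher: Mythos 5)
Your proposal is correct and follows essentially the same route as the paper: take the minimum $a>0$ of $\tau$ on a compact Cauchy surface, use that every inextendible causal curve meets that surface (hence attains a $\tau$-value $\geq a$) and that $\tau\to 0$ along its past by regularity, and conclude by the intermediate value theorem that each level $t\in(0,a]$ is attained, with closedness, acausality and edgelessness of the level sets already supplied by the future-Cauchy property. The extra bookkeeping you add (uniqueness via strict monotonicity, the reduction to ruling out a past Cauchy horizon) is implicit in the paper's shorter argument.
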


\begin{proof}
 Let $S$ be a Cauchy surface. By compactness, $\tau$ achieves a minimum value $a$ on $S$. Since every inextendible causal curve, say $\gamma \colon I \to M$, intersects $S$, there is $s \in I$ such that $\tau \circ \gamma(s) \geq a$. By regularity, $\tau \circ \gamma \to 0$, and it follows that all values in $(0,a]$ are achieved. Hence the corresponding $\tau$-level sets are Cauchy surfaces.
\end{proof}

The following example is adapted from a very similar one in \cite[Sec.~4.3]{AGH98}, where it is used to illustrate that $\tau < \infty$ and $\tau \to 0$ along timelike\footnote{Some sentences in \cite{AGH98} read ``causal", but ``timelike" is really meant (confirmed in private communication with G.~Galloway).} geodesics is not a sufficient condition for regularity of $\tau$. Our modified example has the properties of being future-timelike geodesically complete and admitting a regular cosmological time function whose level sets, however, are not Cauchy.

\begin{example} \label{counterexCauchy}
 Let $\R^3$ be equipped with the $C^1$ Lorentzian metric
 \begin{equation*}
  g := \df y^2 + e^{2y} \left( \df x \df t + (\vert t \vert^{2\alpha} + f(y)) \df x^2 \right),
 \end{equation*}
 where $\frac{1}{2} < \alpha < 1$ and $f(y) := e^{y^2} - 1$. We choose the time orientation so that $\dot t \geq 0$ along all future-directed causal curves. The curve $\eta$ with image $\{t=0,y=0\}$ is a null geodesic (see Figure~\ref{fig:example}).
 
 Suppose that $\gamma \colon [0,\infty) \to \R^3,\ s \mapsto (t(s),x(s),y(s))$ is a future-inextendible future-directed causal curve. We prove that either $\gamma$ is asymptotic to $\eta$, or else $t \to \infty$. The form of the metric $g$ implies that
 \begin{equation*}
     \dot x \leq 0 \text{ and } \dot t \geq -(\vert t \vert^{2 \alpha} + f(y))\dot x.
 \end{equation*}
 Hence, if $y \to \pm \infty$ as $s \to \infty$, then $\dot t \to \infty$, and since the range of $s$ is infinite, $t \to \infty$. Moreover, if $x \to -\infty$ ($+\infty$ is not possible by the choice of time orientation), then either $t\to 0$ and $y\to 0$, or else $t \to \infty$. This proves the claim, which by symmetry is also true for past-inextendible past-directed causal curves.

 It follows from the asymptotic properties of causal curves that $\CS = \{ t = \max\{-1,x\}\}$ is a future Cauchy surface. Hence on $M := I^+(\CS)$ the cosmological time function takes the form $\tau(p) = d(\CS,p)$ and is regular. Moreover, by the time-reverse of \cite[Lem.~4.2]{AGH98}, there is a uniform upper bound on the length of any future-directed causal curve in $M$ with endpoint on $\eta$, so, in particular, $\sup (\tau \circ \eta) < \infty$. Hence the $\tau$-level sets for values larger than $\sup (\tau \circ \eta)$ are not Cauchy surfaces, given that $\eta$ is an inextendible causal curve that does not intersect them.

 It remains to prove that $(M,g)$ can be modified to be future timelike geodesically complete ($(M,g)$ in the current form might already have this property, but that would be more difficult to prove). For this purpose, we show that future-inextendible timelike geodesics always have $t \to +\infty$ (i.e.~they cannot asymptote to $\eta$). Since $\partial_x$ is a Killing vector field, we obtain a constant of motion
 \begin{equation*}
     C = g(\dot\gamma,\partial_x) = e^{2y} \left(\dot t + \left( \vert t \vert^{2 \alpha} + f(y) \right) \dot x \right).
 \end{equation*}
 For a future-directed timelike geodesic $\gamma$, necessarily $C > 0$ and $\dot x < 0$. Hence
 \begin{equation*}
     \dot t = C e^{-2y} - \left( \vert t \vert^{2 \alpha} + f(y) \right) \dot x, 
 \end{equation*}
 Parametrizing $\gamma$ to unit-speed, we furthermore obtain
 \begin{equation*}
     C \dot x + \dot y^2 = -1.
 \end{equation*}
 It follows that $\vert \dot x \vert \geq C^{-1}$, and combining this with the previous equation, we conclude that $\dot t \geq C'$ for some constant $C'>0$. If the parameter range $I$ of $\gamma$ is unbounded, it readily follows that $t \to \infty$. In the case that $I$ is bounded, one can show by integrating the above equations that $t \to \infty$ if $x$ or $y$ are unbounded. But if $x$ and $y$ are bounded, then we also must have $t \to \infty$, given that $\gamma$ was assumed inextendible. This proves that $t \to \infty$ along all future-inextendible future-directed timelike geodesics. Now we can modify $(M,g)$ on the set $\{t > 1\}$ to ensure that these geodesics are complete (e.g.~with a conformal factor, or by gluing in a portion of a spacetime known to be complete). Modifying only on $\{t>1\}$ ensures that the behavior of $\tau$ along $\eta$ is not altered.
\end{example}

\begin{figure}
 \centering
 \includegraphics[scale=1.25]{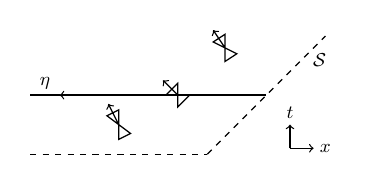}
 \caption{A sketch of the subset $\{y=0\}$ in Example~\ref{counterexCauchy}. Note that the lightcones are narrower away from $\eta$.}
 \label{fig:example}
\end{figure}

One can construct similar examples with infinitely many curves $\eta_i$, such that for any $\varepsilon > 0$, there exists $k$ with $\sup (\tau \circ \eta_k) < \varepsilon$. Then none of the $\tau$-level sets is Cauchy.

\subsection{Cosmological cut locus}

Given a smooth, spacelike Cauchy surface $\CS$, the Lorentzian distance to $\CS$ coincides with the cosmological time function of the spacetime $I^+(\CS)$, and is regular. Hence $\tau$ is smooth in a vicinity of $\CS$, to be precise, away from the cut locus. See \cite{GrSo22} for further discussion of cosmological time functions that arise in this way. Note also that Theorem~\ref{thm:main} deals with the cosmological volume function in the same setting.

Now, let $(M,g)$ be an arbitrary spacetime with regular cosmological time $\tau$. One can try to think of $\tau$ as the distance to some ``initial singularity". In particular, one may define a set analogous to the cut locus to a hypersurface, as follows.

\begin{definition} \label{def:I}
 We define the subset $\I \subset M$ by declaring that $p \in \I$ if there is $\varepsilon > 0$ and a past-directed unit-speed timelike geodesic $\gamma_p \colon (-\varepsilon, \tau(p)) \to M$ with $\gamma_p(0) = p$ such that $\tau(\gamma(s)) = \tau(p) - s$ for all $s \in (-\varepsilon, \tau(p))$. We define the \emph{cosmological cut locus} as $\mathcal{C} := M \setminus \I$.
\end{definition}

Theorem~\ref{thm:timereg}(iv) tells us that a geodesic $\gamma_p$ as in the definition can always be defined on $[0,\tau(p))$, so the non-trivial requirement for $p \in \I$ is that $\gamma_p$ can be prolonged for some short time $\varepsilon$ while retaining the maximization property. The following facts are easy to establish:
\begin{itemize}
 \item $\I$ is dense in $M$.
 \item If $p \in \I$, then the geodesic $\gamma_p \colon [0,\tau(p)) \to M$ provided by Theorem~\ref{thm:timereg}(iv) is unique.
\end{itemize}
To prove the first statement, take any point $p \in M$, and the curve $\gamma_p$ given by Theorem~\ref{thm:timereg}(iv). Then $\gamma_p((0,\tau(p)) \subset \I$, so $p \in \overline{\I}$. The second statement is a direct consequence of the fact that geodesics in smooth spacetimes cannot branch. A related result proven by Sormani and Vega \cite[Prop.~5.3]{SoVe16} is the following:
\begin{itemize}
    \item If $\tau$ is differentiable at $p$, then its gradient vector equals $\dot \gamma_p(0)$ for $\gamma_p$ as in Theorem~\ref{thm:timereg}(iv), and moreover, $\gamma_p$ is unique.
\end{itemize}
In particular, it follows that the gradient of $\tau$ is past-directed unit timelike wherever it exists. Note that the gradient of $\tau_V$ is also past-directed timelike where defined (see proof of Theorem~\ref{thm:main} and \cite[p.~2810]{CGM16}), but it will not be unit in most cases.

The above discussion raises two natural questions:
\begin{itemize}
    \item Is there an implication between $p \in \I$ and differentiability of $\tau$ at $p$?
    \item Is $\mathcal{C}$ closed?
\end{itemize}
Recall that the usual cut locus is closed. We leave the first question open, and give a negative answer to the second one, via Example~\ref{exam2}. We first give a preliminary example where $\I \neq M$.

\begin{example} \label{exam1}
 Consider a function of one variable of the form
 \begin{equation*}
  f(x) := \begin{cases}
             2-\sqrt{1+x^2} &\text{if } 0 \leq \vert x \vert \leq 1, \\
             0 &\text{if } 2 \leq \vert x \vert, \\
             \text{smoothly interpolated} &\text{else.}
            \end{cases}
 \end{equation*}
 Let $M := \{ (t,x) \in \R^{1,1} \mid t > f(x) \}$ be equipped with the Minkowski metric $-\df t^2 + \df x^2$. The graph of $f$ for $0 \leq \vert x \vert \leq 1$ coincides with the hyperboloid 
 \begin{equation*}
     H := \{q \in \R^{1,1} \mid d(q,p) = 1\} = \left\{ (t,x) \mid t = 2-\sqrt{1+x^2} \right\},
 \end{equation*}
 while for $\vert x \vert > 1$, $\operatorname{graph} (f)$ lies above $H$. It follows that $\tau(p) = 1$ and that any segment from a point in $\operatorname{graph} (f) \cap H$ to $p$ is a geodesic realizing $\tau$ (see Figure~\ref{fig:exam1}). Combining this with the fact that maximizing geodesics cannot branch, we deduce that the curve $\gamma \colon s \mapsto (s,0)$ has the following behavior:
 \begin{itemize}
  \item For $0 < s < 1$, $\gamma(s) \in \I$ and $\gamma$ is the unique $\tau$-realizing curve.
  \item For $s = 1$, $\gamma(1) \not\in \I$ and there are infinitely many $\tau$-realizers (including $\gamma$).
  \item For $s > 1$, $\gamma(s) \not\in \I$, $\gamma$ is not a $\tau$-realizer, and there exist two $\tau$-realizers (one in the left and one in the right half-plane).
 \end{itemize}
 In the last case, the $\tau$-realizers are segments whose endpoint on $\operatorname{graph} (f)$ can be found by intersecting $\operatorname{graph} (f)$ with an appropriate hyperboloid centered at $\gamma(s)$ (the one of largest radius that still intersects $\operatorname{graph} (f)$).
\end{example}

Based on the previous example, we construct one where $\mathcal{C}$ is not closed.

\begin{example} \label{exam2}
 Consider $f$ as before and $\tilde f$ given by
 \begin{equation*}
  \tilde f(x) := \begin{cases}
             \frac{f(a_n x+ b_n)}{a_n} &\text{if } -\frac{1}{n} \leq x < -\frac{1}{n+1}, \\
             0 &\text{if } x \geq 0.
            \end{cases}
 \end{equation*}
 where
 \begin{equation*}
  a_n = 4n(n+1), \qquad b_n = 4n+2,
 \end{equation*}
 and $M := \{ (t,x) \in \R^{1,1} \mid t > \tilde f(x) \}$. Then for $\gamma \colon s \mapsto (s,0)$, the following holds:
 \begin{itemize}
  \item $\gamma(s) \in \I = M \setminus \mathcal{C}$ for all $s > 0$.
  \item $\gamma(s) \in \overline{\mathcal{C}}$ for all $s > 0$. In particular, $\gamma(s)$ can be approximated by the sequence of points $p_n := (s,x_n)$ with $x_n < 0$ such that $\tilde f(x_n) = \frac{1}{a_n}$ is a local maximum, and for all $n$ large enough, $p_n \not\in \I$.
 \end{itemize}
 Thus the cosmological cut locus $\mathcal{C}$ is not closed. See also Figure \ref{fig:exam2}. Note that Theorem~\ref{thm:main} is applicable to this spacetime, so it has $\tau_V \in C^1$.
\end{example}

\begin{figure}
\centering
\begin{minipage}{.5\textwidth}
  \centering
  \includegraphics[width=\textwidth]{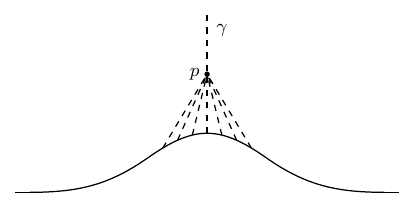}
  \caption{The graph of $f$ as in Example~\ref{exam1}, together with the vertical line $\gamma$, the point $p$, and multiple geodesics that realize $\tau$ at $p$ (dotted).}
  \label{fig:exam1}
\end{minipage}%
\begin{minipage}{.5\textwidth}
  \centering
  \includegraphics[width=\textwidth]{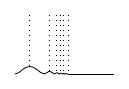}
  \caption{The graph of $\tilde f$ as in Example \ref{exam2}, together with examples of sequences $p_n \not\in \I$ converging (towards the right) to the vertical $t$-axis.}
  \label{fig:exam2}
\end{minipage}
\end{figure}

\section*{Declarations}

No data was generated or processed during this study. The author declares no conflicts of interest. This research was funded by the Austrian Science Fund (FWF) [Grant DOI 10.55776/EFP6]. For open access purposes, the author has applied a CC-BY public copyright license to any author accepted manuscript version arising from this submission. This version of the article has been accepted for publication after peer review, but is not the Version of Record and does not reflect post-acceptance improvements, or any corrections. The Version of Record is available online at: \\ http://dx.doi.org/10.1007/s11005-026-02112-5

\bibliographystyle{abbrv}
\bibliography{mybib.bib}

\end{document}